\title{Boundary Labeling for Rectangular Diagrams\footnote{Research of PB, MK and DM is supported in part by Natural Sciences and Engineering Research Council of Canada (NSERC). SM is supported by a Carleton-Fields postdoctoral fellowship.}}
\titlerunning{Boundary Labeling of Rectangular Diagrams} 
\author[1]{Prosenjit Bose}
\author[2]{Paz Carmi}
\author[3]{J. Mark Keil}
\author[4]{Saeed Mehrabi}
\author[5]{Debajyoti Mondal}
\affil[1]{School of Computer Science, Carleton University, Ottawa, Canada.\\
  \texttt{jit@scs.carleton.ca}}
\affil[2]{Department of Computer Science, Ben-Gurion University, Beer-Sheva, Israel.\\
  \texttt{carmip@cs.bgu.ac.il}}
\affil[3]{Department of Computer Science, University of Saskatchewan, Saskatoon, Canada.
  \texttt{mark.keil@usask.ca}}
\affil[4]{School of Computer Science, Carleton University, Ottawa, Canada.\\
  \texttt{saeed.mehrabi@carleton.ca}}
\affil[5]{Department of Computer Science, University of Saskatchewan, Saskatoon, Canada.
  \texttt{d.mondal@usask.ca}}
\authorrunning{Bose et al.} 
\keywords{Boundary labeling,  Dynamic programming, Outerstring graphs}
\newcommand\len[1]{|#1|} 
\begin{document}

\maketitle

\begin{abstract}
Given a set of $n$ points (sites) inside a rectangle $R$ and $n$ points (label locations or ports) on its boundary, a boundary labeling problem seeks  ways of connecting every site to a distinct port while achieving different labeling aesthetics. We examine the scenario when the connecting lines (leaders)  are drawn as axis-aligned polylines with few bends, every leader lies strictly inside $R$, no two leaders cross, and the sum of the lengths of all the leaders is minimized.  In a  $k$-sided boundary labeling problem, where $1\le k\le 4$, the label locations are located on the $k$ consecutive sides of $R$. 

In this paper, we develop an $O(n^3\log n)$-time algorithm for 2-sided boundary labeling, where the leaders are restricted to have one bend. This  improves the previously best known $O(n^8\log n)$-time algorithm of Kindermann et al. (Algorithmica, 76(1):225--258, 2016). We show the problem is polynomial-time solvable in more general settings such as when the ports are located on more than two sides of $R$, in the presence of obstacles, and even when the objective is to minimize the total number of bends. Our results improve the previous algorithms on boundary labeling with obstacles, as well as provide  the first polynomial-time algorithms for minimizing the total leader length and number of bends for 3- and 4-sided boundary labeling. These results settle a number of open questions on the boundary labeling problems (Wolff, Handbook of Graph Drawing, Chapter 23, Table 23.1, 2014).
\end{abstract}

\section{Introduction}
Labeling problems appear in a variety of   scenarios such as in annotating educational diagrams,  wiring schematics, system manuals, as well as in many engineering and information visualization applications~\cite{handbook,FormannW91,EvansFKKMNV18,BarthNNS16}. The increasing trend of automation in these areas has motivated the research in labeling algorithms. Crossings among the \emph{leaders} (i.e., the lines connecting labels to the sites), number of bends per leader, and the sum of leader lengths are some important aesthetics of a diagram labeling. To achieve clarity and better readability, all these parameters are often preferred to be kept small.  

Many labeling problems are NP-hard~\cite{FormannW91,DBLP:journals/algorithmica/BekosKNS10}. A rich body of research attempts to develop efficient approximation and heuristic  algorithms~\cite{DBLP:journals/isci/Freeman88,hirsh,DBLP:conf/soda/DoddiMMMZ97,DBLP:journals/ior/Zoraster90,DBLP:journals/ior/Zoraster97}, both in the static and the dynamic settings~\cite{BarthNNS16,DBLP:conf/soda/DoddiMMMZ97}. In this paper we examine a well-known variant of the labeling problem called \emph{$b$-bend $k$-sided boundary labeling}, e.g., see Figure~\ref{fig:intro}. The input for this problem is a set of $kn$  sites and $kn$ ports, where the sites lie in the interior of  a rectangle $R$, the ports are located on $k$ consecutive sides of $R$, and each side contains $n$ ports. Both the sites and ports are represented as points. The goal is to decide whether each site can be connected to a unique port using axis-aligned leaders such that the leaders are disjoint, each leader lies strictly inside $R$ and each leader   has  at most $b$ bends. If such a labeling exists, then we compute a labeling that optimizes these labeling aesthetics. We examine two such optimization criteria: one is to minimize the sum of the leader lengths, and the other is to minimize the total number of bends. 

The strict-containment inside $R$, bend restrictions and orthogonal constraints impose certain shapes on the leader. An orthogonal leader containing exactly one bend (resp., 2 bends) is known as a \emph{$po$-leader} (resp., an \emph{$opo$-leader})\footnote{The letters `o' and `p' stands for `orthogonal' and `parallel', respectively. Therefore, an $opo$-leader starts orthogonally at the site, and ends orthogonally at the port.}.

\subparagraph{Related Work:} Boundary labeling has been an active area of research in the last decade, e.g., see the surveys~\cite{wolff,handbook}. The boundary labeling problem was first introduced by Bekos et al.~\cite{DBLP:journals/comgeo/BekosKSW07}. They gave $O(n \log n)$-time algorithms to decide labeling feasibility for the 1-bend 1-sided and opposite 2-sided  models, i.e., the labels are located on two opposite sides of $R$. In addition, they gave an $O(n^2)$-time algorithm that minimizes the total leader length. For the 2-bend 4-sided model, they could test the feasibility in  $O(n \log n)$ time and  reduced the length minimization to a minimum-cost bipartite matching problem. Benkert et al.~\cite{DBLP:journals/jgaa/BenkertHKN09} improved Bekos et al.'s~\cite{DBLP:journals/comgeo/BekosKSW07} result on the 1-bend 1-sided model by devising an $O(n\log n)$-time algorithm for the length minimization. They also considered general cost functions (i.e., beyond Euclidean length), as well as other types of leaders. We refer the reader to ~\cite{DBLP:conf/gis/NollenburgPS10,DBLP:journals/jgaa/BekosCFH0NRS15} for other variants of boundary labeling problem.  

\begin{figure}[pt]
\centering
\includegraphics[width=\textwidth]{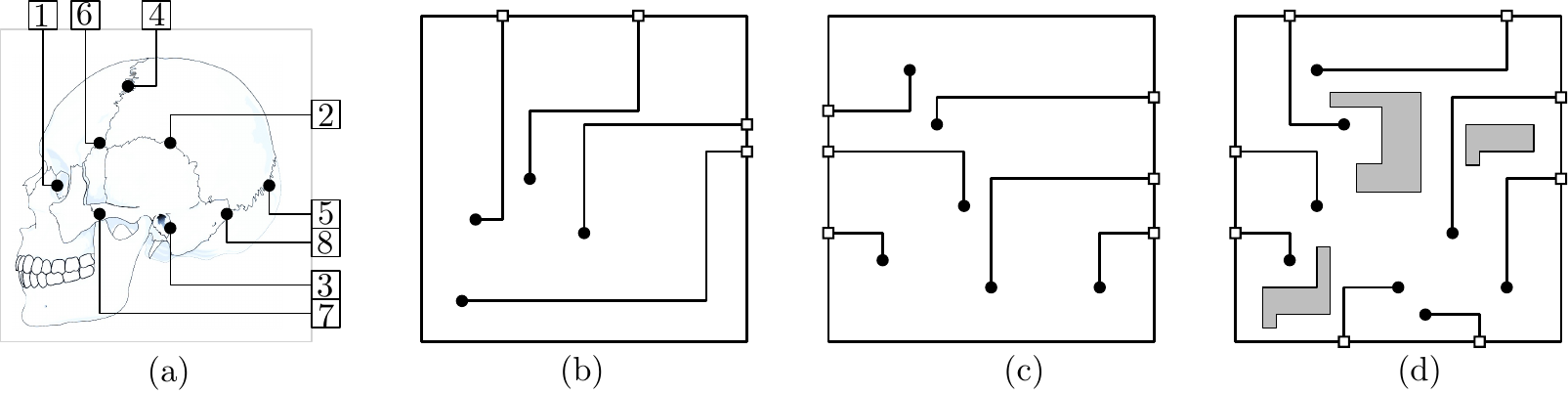}
\caption{(a) A 1-bend 2-sided boundary labeling (i.e., with $po$-leaders). (b) A 2-bend 2-sided boundary labeling   (i.e., with $opo$-leaders). This example does not have a feasible solution with 1-bend leaders. (c) Boundary labeling in 1-bend opposite 2-sided model.  (d) A 1-bend 4-sided boundary labeling in the presence of obstacles.} 
\label{fig:intro}
\end{figure}

The 2-sided model considered by Bekos et al.~\cite{DBLP:journals/comgeo/BekosKSW07} and Benkert et al.~\cite{DBLP:journals/jgaa/BenkertHKN09} is an opposite-sided  model, i.e., ports are placed on two opposite sides of $R$. This model is different from the adjacent 2-sided model, where the labels are always placed on adjacent sides. The adjacent 2-sided model was first considered by Kindermann et al.~\cite{DBLP:journals/algorithmica/KindermannNRS0W16}. For the 1-bend 2-sided model, they gave an $O(n^2)$-time algorithm to check feasibility, and an $O(n^8 \log n)$-time algorithm for total leader length minimization; to our knowledge, this is the fastest algorithm known for the 1-bend 2-sided model. Note that the labeling problem in this model seems surprisingly more difficult than the corresponding opposite 2-sided model (also mentioned by Kindermann et al.~\cite{DBLP:journals/algorithmica/KindermannNRS0W16}). For the 1-bend 3-sided (resp., 4-sided) model, they gave an $O(n^4)$-time (resp., $O(n^9)$-time) algorithm for checking the labeling feasibility, but they were unable to solve the length minimization problem. They posed this as an open question, i.e., can a minimum-length solution for the 3- and 4-sided boundary labeling be computed in polynomial time?  These challenges motivated us to examine the adjacent model in more detail.

Fink and Suri~\cite{DBLP:conf/cccg/0001S16} studied the boundary labeling problem in the presence of \emph{obstacles}. In addition to the set of sites, they allowed a set of orthogonal polygons (equivalently, obstacles) to lie inside $R$. The objective is to minimize   the total leader length with the constraint that the leaders must not intersect the obstacles. They gave polynomial-time algorithms for minimizing the total  leader length in the 1-sided and opposite 2-sided   models, but the running time of these algorithms while using $po$- and $opo$-leaders is fairly high, i.e., $O(n^{4})$, $O(n^{8})$ for the 1-sided model,  and $O(n^{9})$,  $O(n^{21})$ for the opposite 2-sided model. They also examined the case when the leaders have non-uniform lengths and the leader locations can be chosen, which they proved  to be NP-hard. 

A different generalization of boundary labeling considers  sliding ports, i.e., labels are assigned disjoint intervals on the boundary of $R$, and a site can be connected to any point in such an interval. In the 1-sided model, Bekos et al.~\cite{DBLP:journals/comgeo/BekosKSW07} gave an $O(n^2)$-time algorithm that can minimize the total number of bends using $opo$-leader (they did not require the $opo$-leaders to lie strictly insider $R$). They posed an open question to determine the time complexity for the 3- and  4-sided case.  Benkert et al.~\cite{DBLP:journals/jgaa/BenkertHKN09} considered bend minimization with  $po$-leaders. They gave an $O(n^2)$-time algorithm for the 1-sided model, and $O(n^8)$-time algorithm for the opposite 2-sided model. The `Handbook of Graph Drawing'~\cite{handbook} lists a number of open problems related to the minimization of the total number of bends for different variants of boundary labeling.

The 1-, 3- and 4-sided models for the boundary labeling problem are always adjacent models, but a 2-sided model can be either adjacent or opposite. Throughout the paper we will refer to the `opposite' variant as an `opposite 2-sided' model.

\subparagraph{Our Contributions:} We give an algorithm for the 1-bend 2-sided boundary labeling problem that minimizes the total leader length in $O(n^3\log n)$ time (if such a labeling exists). Ours is an adjacent model and uses $po$-leaders, and hence improves the $O(n^8 \log n)$-time algorithm of Kindermann et al.~\cite{DBLP:journals/algorithmica/KindermannNRS0W16}. Since the best known algorithm for the length minimization in the 1-bend opposite 2-sided model takes $O(n^2)$ time~\cite{DBLP:journals/jgaa/BenkertHKN09}, our result raises an intriguing question that whether the adjacent boundary labeling model can further be improved to reach (or, even break) the $O(n^2)$ barrier. 

We show that many variants of the boundary labeling problems can be related to   outerstring graphs, where the  minimization of total leader lengths or bends reduces to an optimization problem in those outerstring  graphs. This idea leads us to the following results:
\begin{enumerate}
\item[-] The first polynomial-time algorithm with a running time of $O(n^6)$ for the 1-bend 3-sided and 4-sided boundary labeling problem that minimize the total leader length. This settles  the time-complexity question posed by Kindermann et al.~\cite{DBLP:journals/algorithmica/KindermannNRS0W16}.
\item[-] Polynomial-time algorithms for minimizing the  total leader length or the total number of bends, even in the presence of obstacles.  Our algorithms work for both $po$- and $opo$-leaders, as well as for  all possible distributions of the ports to the boundary of $R$, i.e., both adjacent and opposite models. The running time for the opposite 2-sided model is $O(n^6)$ for $po$-leaders and $O(n^9)$ for $opo$-leaders; these improve, respectively, the $O(n^9)$- and $O(n^{21})$-time algorithms of Fink and Suri~\cite{DBLP:conf/cccg/0001S16}. This technique can also be applied to the sliding port model, which settles the time-complexity question posed in~\cite{DBLP:journals/comgeo/BekosKSW07, handbook} related to the bend minimization. 
\end{enumerate}

\section{Computing 1-Bend 2-Sided Boundary Labelings}
In this section we give an $O(n^3\log n)$-time algorithm to find a solution to the 1-bend 2-sided boundary labeling problem. Throughout this section, we assume that the sites and ports are in general position, i.e., no axis-aligned straight line passing through a site intersects a port or another site. Consequently, each leader must have exactly one bend. We thus omit the term  `1-bend' in the rest of this section. 

\subsection{Technical Background}
\label{sec:tb}
Let $R(t),R(b),R(l),R(r)$ be the \emph{top, bottom, left and right sides} of $R$, respectively. An \emph{$xy$-separating curve} is an axis-aligned $xy$-monotone polygonal chain that starts at the bottom-left corner of $R$ and ends at the top-right corner of $R$. A 2-sided boundary labeling solution is \emph{$xy$-separated} if there exists an $xy$-separating curve such that the leaders incident to $R(t)$ (resp., $R(r)$) lie on or above (resp., below) the $xy$-separating curve.  

\begin{lemma}[Kindermann et al.~\cite{DBLP:journals/algorithmica/KindermannNRS0W16}]If a 2-sided boundary labeling problem has an affirmative solution with 1-bend leaders, then there exists such an  $xy$-separated solution that minimizes the sum of all leader lengths.
\end{lemma}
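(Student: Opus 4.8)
The plan is to argue by an \emph{exchange (uncrossing) argument} applied to a minimum-length solution. Fix the orientation so that the ports lie on the top side $R(t)$ and the right side $R(r)$. By the $po$-constraint the shape of every leader is then forced: a top-leader is a horizontal segment at its site followed by a vertical segment up to its port, and a right-leader is a vertical segment at its site followed by a horizontal segment out to its port. Hence a solution is simply a crossing-free perfect matching between sites and ports, with both segments of each leader determined by the matching. The first step is to record the \emph{separation criterion}: for a given solution an $xy$-separating curve (which is $xy$-monotone from the bottom-left to the top-right corner, with top-leaders above and right-leaders below) exists if and only if there is no point $a$ on a top-leader and no point $b$ on a right-leader with $a_x \ge b_x$ and $a_y \le b_y$. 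This is the standard condition for two planar point sets to be separable by an increasing rectilinear staircase; I will call such an offending pair an \emph{SE-conflict}. The lemma then reduces to the claim that some minimum-length solution has no SE-conflict.

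Suppose for contradiction that a minimum-length (hence feasible, crossing-free) solution $S$ contains an SE-conflict, witnessed by a top-leader $L_t$ joining site $s$ to top port $P$ and a right-leader $L_r$ joining site $s'$ to right port $Q$, where some point of $L_t$ lies weakly below and to the right of some point of $L_r$. I would \emph{swap the two targets}, reconnecting $s$ to the right port $Q$ and $s'$ to the top port $P$ and leaving every other leader untouched. The intuition is that in an SE-conflict the two leaders ``reach across'' one another, so sending each site to the port it was previously reaching past yields two shorter, properly nested leaders; writing $\ell(L_t)=|p_x-s_x|+(Y-s_y)$ and $\ell(L_r)=|q_y-s'_y|+(X-s'_x)$ for the boundary coordinates $Y$ of $R(t)$ and $X$ of $R(r)$, one checks directly that the swapped pair $s\to Q$, $s'\to P$ is no longer than the original. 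After the swap the new top- and right-leader do not form an SE-conflict, so a suitable measure of non-separation strictly drops.

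The main obstacle is to make this exchange watertight in two respects. First, I must verify that the swap never increases the total length across \emph{all} configurations of an SE-conflict, i.e.\ across every choice of which point of $L_t$ is SE of which point of $L_r$ and every left/right and up/down orientation of the two leaders. This amounts to a finite case analysis establishing a rectilinear quadrangle (Monge-type) inequality for the four points $s,s',P,Q$, which I would organize by the relative $x$-order of $\{s_x,s'_x,p_x\}$ and the relative $y$-order of $\{s_y,s'_y,q_y\}$; this is where the real work lies. Second, I must ensure the swapped matching is still a \emph{feasible}, crossing-free solution, since otherwise it is not a legitimate competitor for minimum length, and I must guarantee that the swap does not spawn new SE-conflicts with third leaders. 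To control both, I would always pick an \emph{innermost} SE-conflict and argue that the two rerouted leaders stay within the bounding region spanned by $L_t\cup L_r$, so any leader that $S$ kept disjoint from this region remains undisturbed.

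With both facts in hand the argument closes cleanly. Starting from any minimum-length solution, each swap preserves optimality: it cannot decrease the length below the optimum, and since it never increases the length it must in fact be length-preserving at an optimum, while strictly reducing the integer count of SE-conflicts. After finitely many swaps we therefore reach a minimum-length solution with no SE-conflict. By the separation criterion this solution admits an $xy$-separating curve, so it is $xy$-separated, and being of minimum length it also minimizes the sum of all leader lengths — exactly the assertion of the lemma.
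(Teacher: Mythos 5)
First, a point of order: the paper itself does not prove this lemma --- it imports it verbatim from Kindermann et al.~\cite{DBLP:journals/algorithmica/KindermannNRS0W16} as a black box --- so I am comparing your attempt against the swap-and-repair technique that this paper uses for its closely analogous statements (the proofs of Lemma~\ref{lem:site} and Lemma~\ref{lem:separation}), which is also the style of argument behind the cited result. Your separation criterion (xy-separability is equivalent to the absence of SE-conflicts) is correct, and so is your pairwise length inequality: a case analysis over the inward/outward orientations of the two leaders shows that for a \emph{crossing-free} top/right pair in SE-conflict, only three of the four orientation combinations can occur, and in each of them the port exchange strictly decreases the combined length under the paper's general-position assumption. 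The genuine gap is your feasibility step. You justify feasibility of the swapped matching by claiming that the rerouted leaders stay inside the bounding region of $L_t\cup L_r$ and that every other leader of $S$ is disjoint from that region. The second claim is false: being crossing-free with $L_t$ and $L_r$ does not keep third leaders out of that region. Concretely, in $R=[0,10]^2$ take $L_t$ joining site $(5,5)$ to top port $(1,10)$, $L_r$ joining site $(3,7)$ to right port $(10,9)$, and a third top-leader $T$ joining site $(2,6)$ to top port $(2.5,10)$. These three leaders are pairwise crossing-free, and $(L_t,L_r)$ is the \emph{only} SE-conflict (every point of $T$ has $x\le 2.5$, every point of $L_r$ has $x\ge 3$), so choosing an ``innermost'' conflict cannot help; yet after your swap the new leader from $(3,7)$ to $(1,10)$ runs along $y=7$ from $x=3$ to $x=1$ and crosses the vertical segment of $T$ at $(2.5,7)$. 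The swapped matching is infeasible, hence not a legitimate competitor, and the contradiction with minimality evaporates.

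Worse, this failure is not an edge case but the \emph{generic} situation exactly where your proof needs to act: since the swap strictly decreases length for any crossing-free conflicting pair (under general position), a minimum-length feasible solution that still has an SE-conflict can only exist because the swap \emph{necessarily} creates a crossing with some third leader --- otherwise the strictly shorter feasible solution would already contradict optimality. So at an optimum, every swap you would perform is guaranteed to destroy feasibility, and the whole iteration never gets off the ground. (Your termination measure has a secondary problem --- a swap can create new SE-conflicts with third leaders, so the conflict count need not drop --- but this is moot next to the feasibility issue.) Nor can one wave this away by ``planarize afterwards'': in the adjacent 2-sided model, a matching with crossings cannot in general be converted into a crossing-free one without increasing length; indeed, Figure~\ref{fig:intro}(b) is an instance that admits crossing matchings but no planar 1-bend solution at all. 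Length-preserving uncrossing is available only for 1-sided instances (Benkert et al., quoted here as Lemma~\ref{lem:benkert}), which is precisely why the swap arguments actually carried out in this paper (Lemmas~\ref{lem:site} and~\ref{lem:separation}) are engineered so that all crossings created by swaps are confined to balanced 1-sided strips where Lemma~\ref{lem:benkert} can repair them, and why grouped multi-leader swaps rather than a single pairwise exchange are used. Supplying that mechanism --- rerouting the obstructing third leaders within 1-sided subproblems at no extra cost --- is the real content of the lemma, and it is exactly what your proposal is missing.
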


Figure~\ref{fig:simple}(a) illustrates an $xy$-separated solution of a 2-sided boundary labeling problem. An $xy$-separated curve is  shown in a light-green. Let $\mathcal{I}$ be an instance of a 2-sided boundary labeling problem. Without loss of generality assume that the ports are distributed along the sides $R(t)$ and $R(r)$.  Let $ports(R(t))$ (resp., $ports(R(r))$) be the set of  ports along $R(t)$ (resp., $R(r)$). A leader is called \emph{inward} if the $90^\circ$ angle formed at its bend point contains the top-right corner of $R$. Otherwise, we call the leader an \emph{outward} leader. The leaders incident to $\ell$ and $\ell'$ in Figure~\ref{fig:simple}(a), are inward and outward leaders, respectively.

\begin{figure}[pt]
\centering
\includegraphics[width=\textwidth]{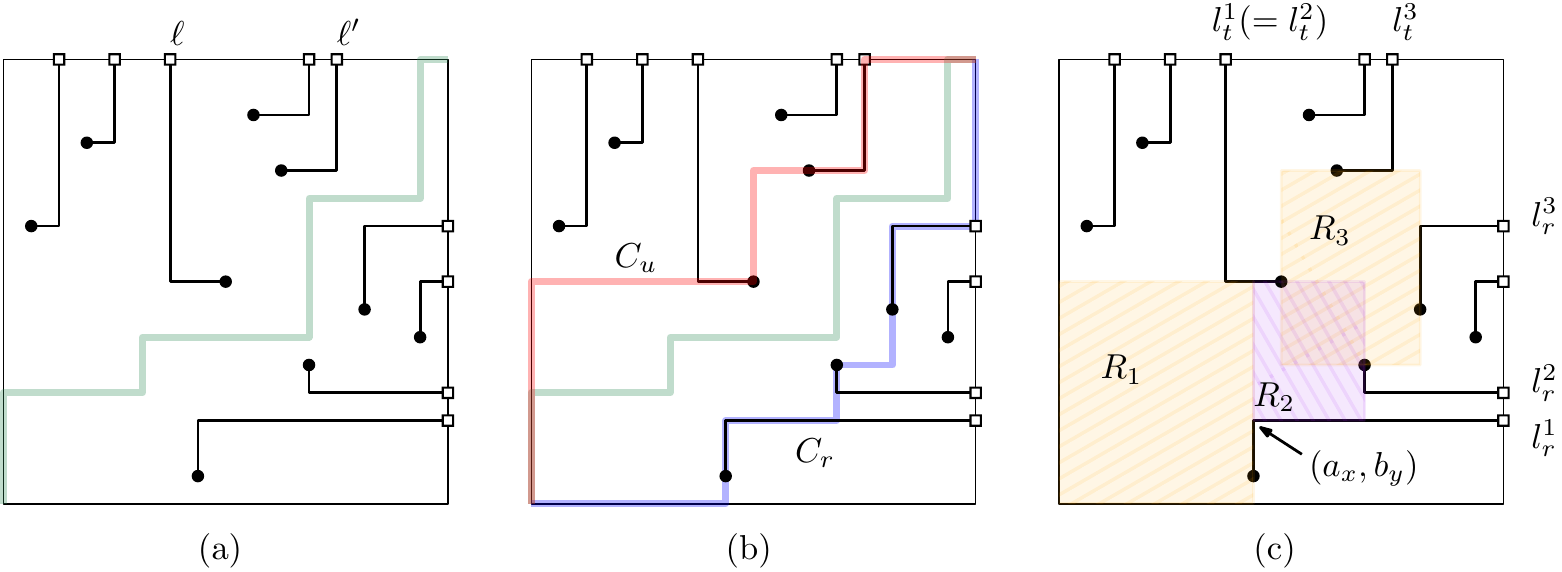}
\caption{(a) An $xy$-separated solution to a 2-sided boundary labeling. The $xy$-separating curve $C$ is shown in light-green. (b) Illustration for the curves $C_u$ and $C_r$. (c) $\mathcal{R}$.}
\label{fig:simple}
\end{figure}

Assume that $\mathcal{I}$ has an affirmative solution $\mathcal{S}$ and let $C$ be a corresponding $xy$-separating curve. Let $up(C)$ be the polygonal region above $C$ bounded by $R(t)$ and $R(l)$. Similarly, let  $right(C)$ be the polygonal region to the right of $C$ bounded by $R(b)$ and $R(r)$. By $C_u$ (resp., $C_r$) we denote the $xy$-separating curve that minimizes the area of $up(C_u)$ (resp., $right(C_r)$), e.g., see Figure~\ref{fig:simple}(b). For a point $p$, let $p_x$ and $p_y$ be its $x$ and $y$-coordinates, respectively. Given $C_u$ and $C_r$, we define a sequence of rectangles $\mathcal{R} = (R_1,R_2,\ldots, R_k)$ as follows:
\begin{enumerate}
\item[-] Each rectangle is a maximal rectangle between $C_u$ and $C_r$.
\item [-] The bottom-left corner of $R_1$ coincides with that of $R$.
\item [-] For $i>1$, we first consider $R_{i-1}$. Since  $R_{i-1}$ is maximal, the top and right sides of $R_{i-1}$ must be determined by a pair of leaders, 
 e.g., see $R_1$ in Figure~\ref{fig:simple}(c). Denote these leaders by $\ell^{i-1}_t$ and $\ell^{i-1}_r$, respectively. Let $a\in \ell^{i-1}_t$ be the rightmost point on the top side of $R_{i-1}$, and let $b\in \ell^{i-1}_r$ be the topmost  point on the right side of $R_{i-1}$. We define $R_i$ to be the maximal empty rectangle with the bottom-left corner at $(a_x,b_y)$ and the sides bounded by $C_u$ and $C_r$. 
\end{enumerate}
 
\subsection{Algorithm}
\label{sec:algo}
The idea of the algorithm is to employ a dynamic programming algorithm based on the idea of finding the optimal rectangle sequence $\mathcal{R}$. Note that for any rectangle $R_j\in \mathcal{R}$, we can think of a subproblem $\lambda(R_j)$ that seeks   a solution  including the leaders $\ell^j_t$ and $\ell^j_r$.   The rectangle  $B(R_j)$ corresponding to this subproblem  is determined by the vertical  and horizontal segments of $\ell^j_t$ and $\ell^j_r$, as illustrated in gray in Figure~\ref{fig:det}(a). It is straightforward to add a dummy rectangle $R_0$ with corresponding leaders $\ell^0_t$ and $\ell^0_r$ such that $\lambda(R_0)$ represents the original  2-sided boundary labeling problem, e.g., see  
 Figure~\ref{fig:det}(b).

\begin{figure}[pt]
\centering
\includegraphics[width=\textwidth]{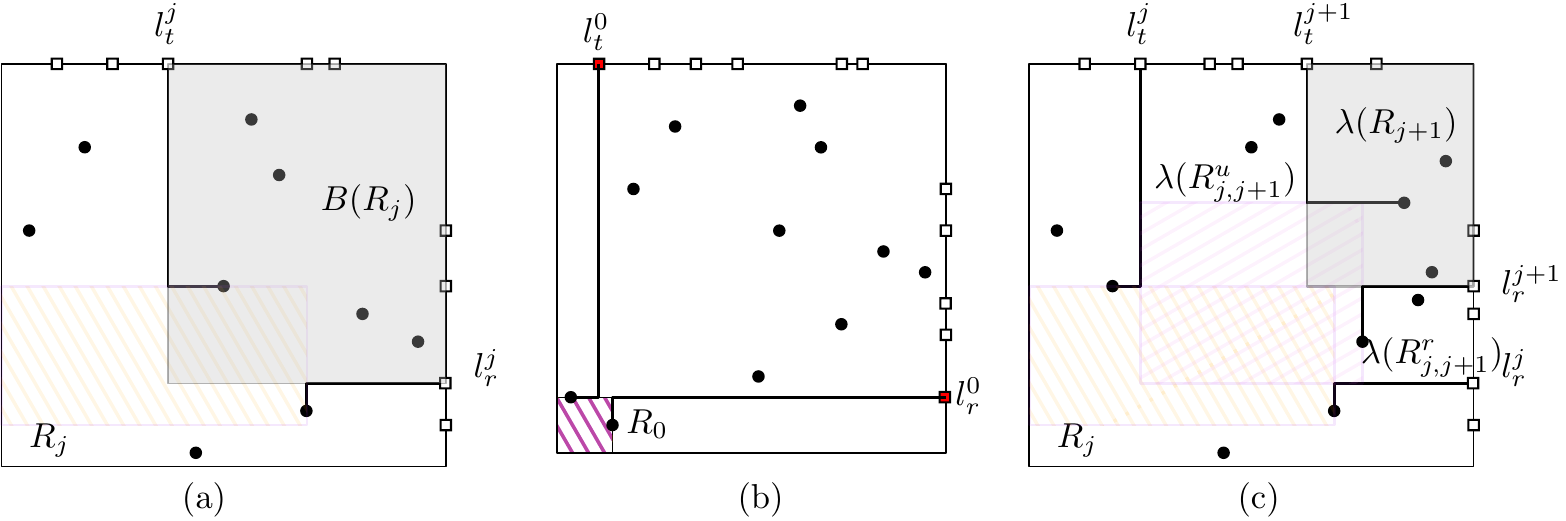}
\caption{Illustration for the dynamic programming algorithm.} 
\label{fig:det}
\end{figure} 

Given $R_j$, we try to find $R_{j+1}$ by checking all possible candidate rectangles. For convenience, we defer the details of finding all candidate rectangles, and focus on the  computation of the solution cost (sum of leader length) assuming that we have found $R_{j+1}$. Figure~\ref{fig:det}(c) illustrates such a scenario. Let $R^t_{j,j+1}$ be the region bounded by the lines determined by the vertical segments  of $\ell^j_t$ and $\ell^{j+1}_t$, horizontal segment of  $\ell^j_t$, and $R(t)$. Define $R^r_{j,j+1}$ symmetrically, e.g., see the top of Figure~\ref{fig:finedet}(i).  Observe that $\lambda(R^t_{j,j+1})$ is a 1-sided boundary labeling problem with leaders $\ell^j_t$ and $\ell^{j+1}_t$. In other words, since $R_{j+1}$ is an empty rectangle, all the ports between $\ell^j_t$ and $\ell^{j+1}_t$ must be connected to some site interior to $R^t_{j,j+1}$. We define $\lambda(R^r_{j,j+1})$ symmetrically. It is now straightforward to express the solution of $\lambda(R_j)$ in terms of the solutions of  $\lambda(R^t_{j,j+1}),\lambda(R^r_{j,j+1})$, and $\lambda(R_{j+1})$.

For any leader $l$, we denote its length by $|l|$. Let $|\lambda(R_j)|$ be the sum of the leader lengths in an optimal solution of $\lambda(R_j)$ (excluding the lengths of $\ell^j_t$ and $\ell^j_r$). Let $ports(B(R_j))$ and $sites(B(R_j))$ be the number of ports and sites interior to $B(R_j)$, excluding those that are incident to $\ell^j_t$ and $\ell^j_r$. We now have the following recursive formula, where $\mathcal{C}$ denotes the set of candidate rectangles. 
\begin{equation*}
  |\lambda(R_j)|=\begin{cases}
    \infty, & \text{if $ports(B(R_j)) \not= sites(B(R_j))$}.\\
    (|\ell^j_t| + |\ell^j_r|)+ \\\min\limits_{R_{j+1}\in \mathcal{C}}\{ |\lambda(R^t_{j,j+1})| {+} |\lambda(R^r_{j,j+1})| {+} |\lambda(R_{j+1})|\}, & \text{otherwise}.
  \end{cases} 
\end{equation*}

\subparagraph{Finding Candidate Rectangles:}
Given a rectangle $R_j$, we now describe how to find a set of candidate rectangles that must include $R_{j+1}$. Recall that we can compute the bottom-left corner $(a_x,b_y)$ of $R_{j+1}$ from $R_j$.    Figures~\ref{fig:finedet}(a)--(d) illustrate the scenarios where $\ell^j_t$ and $\ell^j_r$ are inward. The point $(a_x,b_y)$  is marked with a cross. We claim that either the top or the right side of $R_{j+1}$ must contain a site (Lemma~\ref{lem:site}). We will use the following result of Benkert et al.~\cite{DBLP:journals/jgaa/BenkertHKN09} to prove Lemma~\ref{lem:site}. 

\begin{figure}
\centering
\includegraphics[width=\textwidth]{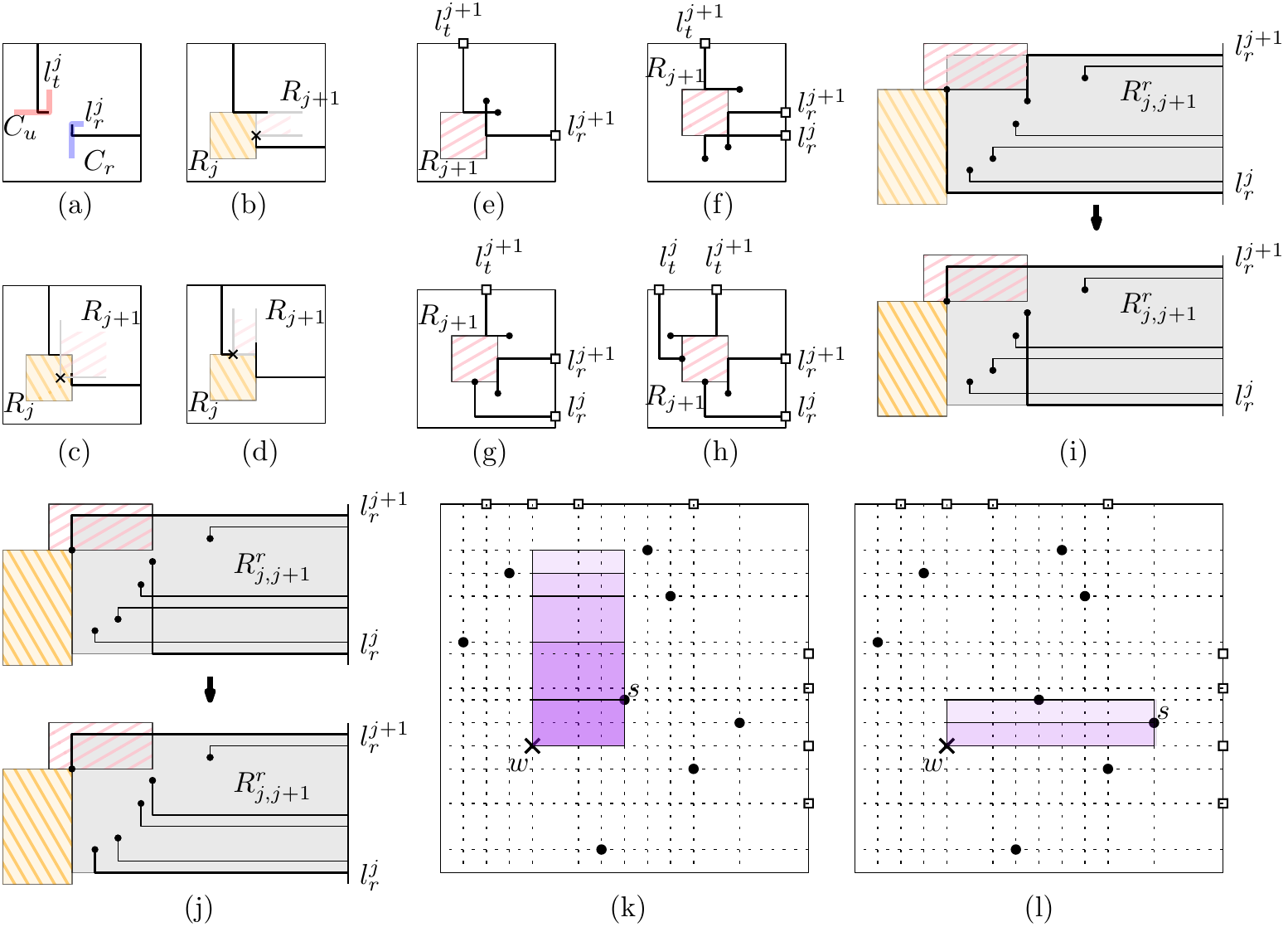}
\caption{Illustration for (a)--(d)  $(a_x,b_y)$,  (e)--(j)  Lemma~\ref{lem:site}, and (k)--(l)  candidate rectangles.}
\label{fig:finedet}
\end{figure}

\begin{lemma}[Benkert et al.~\cite{DBLP:journals/jgaa/BenkertHKN09}]\label{lem:benkert}
For any solution $S$ to a 1-bend 1-sided boundary labeling problem  that minimizes the total leader length (possibly with crossings), there exists a crossing free labeling with the total leader length at most the total leader length of $S$.
\end{lemma}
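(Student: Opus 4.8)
The plan is to exploit the rigid shape of a $po$-leader in the $1$-sided setting. Assume without loss of generality that all ports lie on the top side $R(t)$, and let $h$ be the $y$-coordinate of $R(t)$. Then the $po$-leader from a site $s=(s_x,s_y)$ to a port of $x$-coordinate $q$ consists of a horizontal segment from $(s_x,s_y)$ to $(q,s_y)$ followed by a vertical segment up to the port, so its length is $|s_x-q|+(h-s_y)$. The vertical part depends only on the site, hence $\sum_{\text{sites}}(h-s_y)$ is a constant independent of the assignment, and minimizing the total leader length is equivalent to minimizing the total \emph{horizontal} length $\sum_i|s^{(i)}_x-q_i|$ over all port assignments. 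It therefore suffices to produce a crossing-free assignment whose total horizontal length is no larger than that of $S$. In this setting crossings are easy to characterize: since distinct sites have distinct $y$-coordinates and distinct ports have distinct $x$-coordinates, no two horizontal segments and no two vertical segments cross, so every crossing is between the horizontal segment of some leader $i$ (at height $s^{(i)}_y$) and the vertical segment of another leader $j$ (at $x=q_j$, spanning $[s^{(j)}_y,h]$), occurring exactly when $s^{(i)}_y>s^{(j)}_y$ and $q_j$ lies strictly between $s^{(i)}_x$ and $q_i$.

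The core of the argument is a local exchange. If leaders $i$ and $j$ cross as above, I would reconnect $s^{(i)}$ to $q_j$ and $s^{(j)}$ to $q_i$. Because $q_j$ lies strictly between $s^{(i)}_x$ and $q_i$, we have $|s^{(i)}_x-q_i|=|s^{(i)}_x-q_j|+|q_j-q_i|$, so the change in total horizontal length is
\[
\bigl(|s^{(i)}_x-q_i|+|s^{(j)}_x-q_j|\bigr)-\bigl(|s^{(i)}_x-q_j|+|s^{(j)}_x-q_i|\bigr)=|q_j-q_i|+|s^{(j)}_x-q_j|-|s^{(j)}_x-q_i|\ge 0
\]
by the triangle inequality. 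Thus the swap never increases the total length. Moreover it always destroys the $i$--$j$ crossing: after the swap the new horizontal of $i$ crosses the new vertical of $j$ only if $q_i$ lies strictly between $s^{(i)}_x$ and $q_j$, which is incompatible with the original condition, and the new horizontal of $j$ cannot reach the vertical of $i$ because $s^{(j)}_y<s^{(i)}_y$.

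To finish I would run this as a monovariant argument. Starting from a minimum-length solution $S$, I note a sharpening: since the length cannot drop below the minimum, the displayed quantity must be $0$ for \emph{every} crossing-pair swap applied to a minimum-length solution, so all such swaps are length-preserving and the entire difficulty concentrates on the number of crossings. I would therefore pick a minimum-length solution minimizing the number of crossings, and if a crossing remained, resolve it by a swap to reach a contradiction. The main obstacle is precisely that a length-preserving swap could in principle introduce new crossings with a third leader, so I must exhibit a crossing whose resolution makes the total count strictly decrease. Here I would use the locality of the exchange (only the two verticals swap their $x$-positions while the two horizontals change) together with the structural facts that the horizontal segment of the lowest site crosses nothing and the vertical segment of the highest site is crossed by nothing. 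Choosing an \emph{innermost} crossing to resolve --- for instance one for which no other vertical segment lies between $q_i$ and $q_j$, or whose crossing point is lowest --- should confine the swap to a region free of interfering third leaders, so that no new crossing is created and the count strictly drops. Pinning down this canonical choice and verifying the no-new-crossing claim is the crux; the reduction to horizontal distances and the triangle-inequality computation are routine.
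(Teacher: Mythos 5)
The paper itself never proves this lemma --- it imports it from Benkert et al.\ as a black box --- so your argument has to stand entirely on its own. Most of it does: the reduction to horizontal lengths, the characterization of crossings, the exchange, the triangle-inequality bound, and the observation that at a minimum-length solution every such swap must be length-preserving are all correct. The genuine gap is exactly where you locate it, and it cannot be closed the way you suggest: the claim that an ``innermost'' crossing can be swapped without creating a new crossing is false. Take sites $s^{(i)}=(0,10)$, $s^{(j)}=(2,1)$, $s^{(k)}=(8,5)$ and ports at $x$-coordinates $5,10,11$ on the top side. The assignment $i\mapsto 10$, $j\mapsto 5$, $k\mapsto 11$ has total horizontal length $16$, which is minimum over all six assignments, and it has exactly one crossing ($i$'s horizontal with $j$'s vertical at $x=5$); being unique, this crossing is innermost under any criterion you propose --- no other vertical lies between $q_j=5$ and $q_i=10$, and its crossing point is lowest. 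Swapping it gives $i\mapsto 5$, $j\mapsto 10$, $k\mapsto 11$, which again has exactly one crossing: $k$'s horizontal at height $5$ now crosses $j$'s new vertical at $x=10$. So the crossing count does not drop, no alternative choice of crossing exists, and a ``minimum-length solution with fewest crossings, resolve one crossing'' argument of the purely local form you describe cannot go through.

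The repair is to replace the crossing count by a potential that every swap does decrease strictly. Keep your notation and let $h_i=|s^{(i)}_x-q_i|$ be the horizontal length of leader $i$; set $\Phi=\sum_i s^{(i)}_y\,h_i$. When you swap a crossing pair with $s^{(i)}_y>s^{(j)}_y$, leader $i$'s horizontal shrinks by exactly $|q_i-q_j|$ (because $q_j$ lies strictly between $s^{(i)}_x$ and $q_i$), while leader $j$'s horizontal grows by at most $|q_i-q_j|$ (triangle inequality), so
\[
\Delta\Phi\;\le\;-\,s^{(i)}_y\,|q_i-q_j|+s^{(j)}_y\,|q_i-q_j|\;=\;\bigl(s^{(j)}_y-s^{(i)}_y\bigr)\,|q_i-q_j|\;<\;0,
\]
while the total length changes by at most $0$, exactly as in your computation. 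Since there are only finitely many assignments, repeatedly swapping an arbitrary crossing pair must terminate, necessarily at a crossing-free solution of total length at most that of $S$; in the example above, $\Phi$ goes $118\to 73\to 69$ and the process reaches a crossing-free assignment after two swaps. Note that this version needs no canonical choice of crossing and no minimality assumption on $S$, so it proves the (stronger) statement for an arbitrary starting solution.
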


\begin{lemma}\label{lem:site}
Either the top or the right side of $R_{j+1}$ must contain a site. 
\end{lemma}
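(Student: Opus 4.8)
The plan is to argue by contradiction: assume that neither the top nor the right side of $R_{j+1}$ contains a site, and derive a violation of one of the three properties that define the decomposition, namely the pairwise disjointness of the leaders in $\mathcal{S}$, the maximality of $R_{j+1}$ between $C_u$ and $C_r$, and the minimality of the areas $up(C_u)$ and $right(C_r)$.

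First I would identify the objects that support the two sides. Since $R_{j+1}$ is a maximal empty rectangle between $C_u$ and $C_r$, its top side lies on $C_u$ and its right side lies on $C_r$. Because $C_u$ (resp.\ $C_r$) minimizes $up(C_u)$ (resp.\ $right(C_r)$), it tightly hugs the top (resp.\ right) leaders, so the top (resp.\ right) side of $R_{j+1}$ is realized by the horizontal (resp.\ vertical) segment of some top leader $\ell^{j+1}_t$ (resp.\ right leader $\ell^{j+1}_r$); the degenerate cases where a side lies on $R(t)$ or $R(r)$ are treated separately. Next I would locate the sites of these two supporting leaders. Recall that an inward top leader has its site to the right of its port while an outward one has it to the left, and symmetrically an inward (resp.\ outward) right leader has its site above (resp.\ below) its port. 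Because the supporting segment carries the full width $[a_x,c_x]$ (resp.\ height $[b_y,d_y]$) of its side, the assumption that the side holds no site forces the site of $\ell^{j+1}_t$ strictly outside the $x$-range $[a_x,c_x]$ and the site of $\ell^{j+1}_r$ strictly outside the $y$-range $[b_y,d_y]$, and the classification above pins down on which end each site sits.

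The heart of the argument is a short case analysis over the four orientation combinations of $(\ell^{j+1}_t,\ell^{j+1}_r)$ together with the boundary cases. I expect the inward configurations to be the clean ones: there the two supporting segments are forced to run through the top-right corner $q=(c_x,d_y)$ of $R_{j+1}$, the horizontal segment of $\ell^{j+1}_t$ passing through $q$ toward a site on its right and the vertical segment of $\ell^{j+1}_r$ passing through $q$ toward a site above it, so that two distinct leaders share the point $q$, contradicting that the leaders of $\mathcal{S}$ are pairwise disjoint. In the remaining outward and boundary configurations the relevant segment instead points away from $q$, so no crossing is forced; there I would argue that the corner neighbourhood of $q$ is free of sites and of leader interiors, which lets me either enlarge $R_{j+1}$ past $q$ (contradicting its maximality) or shift $C_u$ or $C_r$ locally into the corner (contradicting the minimality of $up(C_u)$ or $right(C_r)$). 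Where this local modification reroutes a one-sided bundle of leaders I would invoke Lemma~\ref{lem:benkert} to guarantee that the rerouting is realizable crossing-free and without increasing the total length, so that the contradiction survives.

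The main obstacle, and the reason for the case split, is precisely the outward and boundary configurations: unlike the inward case they hand us no immediate crossing, and one must check carefully that the enlargement or curve-shift keeps every leader disjoint and strictly inside $R$ while genuinely decreasing the governing area or growing the rectangle. Verifying that the degenerate alignments, such as a supporting bend falling exactly on $x=c_x$ or $y=b_y$, or a side lying on $R(t)$ or $R(r)$, still yield one of the three contradictions is where the bookkeeping concentrates.
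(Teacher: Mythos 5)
There is a genuine gap in your handling of the non-inward cases, and it stems from a misidentification of \emph{which} property of the solution the lemma contradicts. The statement is not a purely structural fact about feasible $xy$-separated solutions: a configuration in which neither the top nor the right side of $R_{j+1}$ contains a site can perfectly well occur in a feasible, crossing-free solution whose rectangles are maximal and whose curves $C_u, C_r$ are area-minimal. Concretely, take $\ell^{j+1}_t$ inward and $\ell^{j+1}_r$ outward, with $\ell^j_r$ inward (the paper's Figure~\ref{fig:finedet}(g)): all three of your contradiction targets --- pairwise disjointness of the leaders, maximality of $R_{j+1}$, and minimality of $up(C_u)$ and $right(C_r)$ --- are satisfied there. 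What rules this configuration out is only that the solution has \emph{minimum total leader length}: since (by general position) the $y$-intervals of the vertical segments of $\ell^j_r$ and $\ell^{j+1}_r$ overlap, swapping their site assignments strictly decreases the total leader length, and Lemma~\ref{lem:benkert} then converts the possibly-crossing result into a crossing-free labeling inside $R^r_{j,j+1}$ without increasing length --- contradicting optimality. Your proposal never derives a strict length decrease (you invoke Lemma~\ref{lem:benkert} only to keep lengths from \emph{increasing} during a rerouting), so it cannot produce the contradiction in these cases; no amount of care about maximality or curve-shifting will, because those properties are simply not violated.

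There is also a concrete false step supporting your outward-case plan: in the mixed case above, the claim that ``the corner neighbourhood of $q$ is free of sites and of leader interiors'' fails. Under the no-site assumption, the site of the inward leader $\ell^{j+1}_t$ lies strictly to the right of $q$, so its horizontal segment runs through the corner neighbourhood; hence $R_{j+1}$ cannot be enlarged past $q$ and $C_u$ cannot be pushed into that corner. Your inward--inward case is correct and coincides with the paper's Case~1 (the two supporting segments must cross at $q$, violating disjointness), but the remaining three cases need the exchange argument based on length optimality, which is the heart of the paper's proof and is absent from your attempt.
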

\begin{proof}
Suppose for a contradiction that neither the top nor the right side of $R_{j+1}$ contains a site. We now consider a few cases. 

\textbf{Case 1 (both $\ell^{j+1}_t$  and $\ell^{j+1}_r$  are inward):} In this case the leaders $\ell^{j+1}_t$ and $\ell^{j+1}_r$ must intersect (see Figure~\ref{fig:finedet}(e)), which contradicts that the underlying solution is crossing-free. 

\textbf{Case 2 ($\ell^{j+1}_t$ is inward and $\ell^{j+1}_r$  is outward):} If $\ell^j_r$ is outward, then it must intersect $\ell^{j+1}_r$ (see Figure~\ref{fig:finedet}(f)). Therefore, the leader $\ell^j_r$ must be inward, as illustrated in  Figure~\ref{fig:finedet}(g). Note that by our general position assumption, the `$y$-intervals' determined by the vertical segments of $\ell^{j}_r$  and $\ell^{j+1}_r$ must overlap. Consequently, by swapping the site assignments, we can obtain a solution  
(possibly with crossings) with strictly smaller total leader length. Figure~\ref{fig:finedet}(i) illustrates such a scenario. 
By Lemma~\ref{lem:benkert}, we can replace this labeling of  $\lambda(R_{j,j+1})$ with a crossing free labeling that lies inside $R^r_{j,j+1}$ and does not increase the total leader length, e.g., see  Figure~\ref{fig:finedet}(j). Note that the total leader length of the resulting solution would be strictly smaller, contradicting that the current solution is optimal. 

\textbf{Case 3 ($\ell^{j+1}_t$ is outward and $\ell^{j+1}_r$  is  inward):} This case is symmetric to Case 2.

\textbf{Case 4 (both $\ell^{j+1}_t$ and $\ell^{j+1}_r$ are outward):} 
 We can process this case in the same way as we did in Case 2.
\end{proof}

Recall that we know the bottom-left point $w$ of $R_{j+1}$. We first assume that the right side of $R_{j+1}$ contains a site. For every site $s$ with $s_x> w_x$ and $s_y>w_y$, we consider all possible empty rectangles with bottom-left corner $w$, right side passing through $s$ and the top side determined by a horizontal line passing through a site above $s$. Figures~\ref{fig:finedet}(k)--(l) illustrate the candidate rectangles for the bottom left point $w$. We then assume that the top side of $R_{j+1}$ contains a site, and find the candidate rectangles symmetrically. We can now  obtain an upper bound on the distinct candidate rectangles.
\begin{lemma}\label{lem:cr}
The overall number of distinct candidate rectangles  
 is $O(n^3)$. 
\end{lemma}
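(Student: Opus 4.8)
The plan is to bound the number of candidate rectangles by counting the independent choices that define each one. Recall that every candidate rectangle $R_{j+1}$ is determined by three geometric quantities: its bottom-left corner $w=(a_x,b_y)$, one side (either the right or the top) that passes through a site, and the remaining side that is fixed by a horizontal (resp.\ vertical) line through a second site. I would argue that each of these contributes at most a linear number of choices, so that the product is $O(n^3)$.

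First I would count the number of distinct bottom-left corners $w$ that can arise. By the construction of the rectangle sequence, the corner $w$ has the form $(a_x,b_y)$, where $a_x$ is the $x$-coordinate determined by the vertical segment of the top leader $\ell^{j}_t$ and $b_y$ is the $y$-coordinate determined by the horizontal segment of the right leader $\ell^{j}_r$. Since each leader is incident to a distinct site, the coordinate $a_x$ can take at most $O(n)$ distinct values (one per candidate top leader), and likewise $b_y$ can take at most $O(n)$ values. Hence there are at most $O(n^2)$ possible corners $w$. I would make precise here that we do not need to enumerate corners independently of the earlier stages of the dynamic program: the corner is inherited from $R_j$, but for the purpose of an aggregate bound over the whole table it suffices that the total number of achievable corners is $O(n^2)$.

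Next, for a fixed corner $w$, I would count the rectangles that can be completed from it. By the case analysis just established (Lemma~\ref{lem:site}), either the top or the right side of $R_{j+1}$ contains a site, so we may handle the two cases separately and sum. Consider the case where the right side passes through a site $s$ with $s_x>w_x$ and $s_y>w_y$: this fixes the width of the rectangle, leaving only the height to be chosen. By construction the top side is pinned to a horizontal line through some site lying above $s$, so there are at most $O(n)$ choices for that top side. Thus each corner yields $O(n)$ candidates in this case, and symmetrically $O(n)$ in the case where the top side carries a site. Combining, each of the $O(n^2)$ corners produces $O(n)$ candidate rectangles, for a grand total of $O(n^3)$.

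The main obstacle I anticipate is making the first counting step honest: one must verify that the bottom-left corners $w$ really range over only $O(n^2)$ distinct points across the entire computation, rather than $O(n^2)$ \emph{per} value of $j$, which would inflate the bound. The cleanest way to settle this is to observe that each coordinate of $w$ is drawn from the fixed set of $x$-coordinates (resp.\ $y$-coordinates) induced by the sites and ports, of which there are only $O(n)$ each; this makes the set of geometrically realizable corners a subset of an $O(n)\times O(n)$ grid independent of $j$, and the product with the $O(n)$ choices for the free side then gives the stated $O(n^3)$ bound.
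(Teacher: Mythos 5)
There is a genuine gap, and it sits exactly where the paper's proof does its real work. Your per-corner count is wrong: for a fixed bottom-left corner $w$, the site $s$ on the right side of $R_{j+1}$ is itself a free choice with up to $O(n)$ options, and \emph{then} for each such $s$ you grant $O(n)$ choices of top side. That is $O(n)\times O(n)=O(n^2)$ candidates per corner, not the $O(n)$ you claim, and combined with the $O(n^2)$ corners it yields only the naive $O(n^4)$ bound --- precisely the count the lemma is meant to beat. The sentence ``this fixes the width of the rectangle, leaving only the height to be chosen'' silently treats $s$ as determined by $w$, but nothing in your argument pins $s$ down; many sites to the upper-right of $w$ can serve as the right-side site. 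You also diagnosed the wrong step as the delicate one: bounding the number of corners by $O(n^2)$ (intersections of the horizontal and vertical lines through sites and ports) is the easy part and is handled the same way in the paper; the hard part is the per-corner bound.

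The missing idea is to exploit the requirement that the interior of $R_{j+1}$ be \emph{empty}. Because of emptiness, the right side of $R_{j+1}$ can only pass through a site that is not dominated by any other site in the region $D$ dominated by $w$ --- i.e., a ``staircase'' site. Order these staircase sites $s_1,\dots,s_k$ by increasing $y$-coordinate (so decreasing $x$-coordinate, since none dominates another). If the right side passes through $s_i$, the top side cannot rise above $s_{i+1}$, or else $s_{i+1}$ would lie inside the rectangle; hence the admissible top-side heights $H(s_i)$ are confined to the slab between $s_i$ and $s_{i+1}$ (or $R(t)$ when $i=k$). These slabs are pairwise disjoint, so $\sum_i |H(s_i)| = O(n)$: the choices of $s$ and of the height are not independent, and their \emph{total} number over all $s$ is linear. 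This disjointness argument is what collapses the per-corner count from $O(n^2)$ to $O(n)$ and gives $O(n^2)\cdot O(n)=O(n^3)$ overall; without it (or an equivalent), the claimed bound does not follow.
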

\begin{proof}
For a particular bottom-left corner $w$, it may initially appear that there are  $O(n^2)$ possible candidate rectangles to explore. But we can prove an $O(n)$ upper bound, as follows.

Let $D$ be the region dominated by $w$; i.e., for each point $q\in D$, the $x$ and $y$-coordinates of $q$ are at least as large as those of $w$. Let $S = \{s_1,s_2,\ldots,s_k\}$ be the  set of sites in $D$ (ordered by increasing $y$-coordinates) such that no site in $S$ is dominated by any other site in $D$ (except possibly for $w$). We may assume without loss of generality (see Lemma~\ref{lem:site}) that the right side of $R_{j+1}$  contains a site. Since the proper interior of the rectangle is empty, for each $s_i$, where $1\le i\le k$, we only need to consider a set of heights $H(s_i)$ that lie between $s_i$ and $s_{i+1}$ (or, between $s_i$ and $R(t)$ when $i=k$).   For every pair of sites $\{s,s'\}\in S$, we have the property that neither $s$ nor $s'$ dominates the other. Therefore, we have $H(s)\cap H(s') = \phi$,  $\sum_i{H(s_i)} = O(n)$, and thus a linear number of candidate rectangles for $w$.

The number of possible intersections (i.e., bottom-left corners) among the horizontal and vertical lines passing through the ports and sites is $O(n^2)$. Therefore, the number of distinct candidate rectangles that may appear over the run of the algorithm is $O(n^3)$.
\end{proof}

\subparagraph{Data-structures and Time Complexity:} If we use an $O(n^2)\times O(n^2)$ dynamic programming table and compute each entry by checking $O(n)$ candidate rectangles, then we need at least $O(n^5)$ time.  To improve the running time to $O(n^3 \log n)$, we preprocess the input.  
 For every possible  matching of a pair of ports  (on the same side of $R$) to a pair of sites, we compute and store the solution to the  corresponding 1-sided boundary labeling problem. Since there are $O(n^4)$ such 1-sided problems, and each of them can be answered in $O(n\log n)$ time~\cite{DBLP:journals/jgaa/BenkertHKN09},  this takes $O(n^5 \log n)$ time. 
 We first show how to reduce this preprocessing time to $O(n^3\log n)$.

Consider a subproblem $\lambda(R^t_{j,j+1})$. Such a problem can easily be expressed by the ports and sites incident to $\ell^j_t$ and $\ell^{j+1}_t$.  Here we encode $\lambda(R^t_{j,j+1})$ in a slightly different way. We use the parameters $p,p',\alpha ,\beta$, where $p,p'$ are the ports incident to $\ell^j_t$ and $\ell^{j+1}_t$, $\alpha$ is either $\infty$ or the $y$-coordinate of a site,  and $\beta$ is the `type' of $\lambda(R^t_{j,j+1})$. We will express $\lambda(R^t_{j,j+1})$ as $S(p,p',\alpha,\beta)$. In the following we  describe the details of $S(p,p',\alpha,\beta)$.

Note that to solve $\lambda(R^t_{j,j+1})$ affirmatively, we need exactly as many free sites as the number of ports between  $p$ and $p'$. Thus for any subproblem, if the number of free sites and free ports interior to $R^t_{j,j+1}$ do not match, then we can immediately return a negative answer. We assume that the points and ports are stored in an orthogonal range counting data structure (with $O(n\log n)$-time preprocessing) such  that given an orthogonal rectangle, one can report the number of ports and points interior to the rectangle in $O(\log n)$ time~\cite{Berg08}.  We only focus on those instances that have the same  number of  free sites and ports, and express them in the form  $S(p,p',\alpha,\beta)$. 

Let $s,s'$ be the sites that are incident to $\ell^j_t$ and $\ell^{j+1}_t$, respectively. By the property of the optimal solution, we may assume that $s_y< s'_y$. We define $\lambda(R^t_{j,j+1})$  as having type 0, 1, 2 or 3 depending on whether $s,s'$ belongs to $R^t_{j,j+1}$ or not. 

\textbf{Type 1 (both $s,s'$ are outside $R^t_{j,j+1}$):} In this case the rectangle determined by the bend points of $\ell^{j}_t$ and $\ell^{j+1}_t$  must be empty (i.e., the gray region in Figure~\ref{options}(a)). We set $\alpha$ to be $\infty$, and $\beta$ to be 1. During the algorithm execution, if $\lambda(R^t_{j,j+1})$ is of Type 1, then we will seek   a solution to  $S(p,p',\infty,1)$. 

Note that for any instance of the form  $S(p,p',\infty,1)$, we can determine in $O(1)$ time\footnote[2]{It is straightforward to preprocess the ports and sites in $O(n^3)$ time in a data structure to answer such queries in $O(1)$ time.}     the point $s''$  such that the rectangle $B$ determined by $p,p',s''$ contains an equal number of free ports and sites. Note that the solution to  the labeling problem inside $B$ will be equivalent to that of $\lambda(R^t_{j,j+1})$. We will precompute the solutions of $S(p,p',\infty,1)$ so that $\lambda(R^t_{j,j+1})$ can be answered in $O(1)$ time by a table look-up. This general idea of answering a problem $\lambda(\cdot)$ using $S(\cdot)$ applies also to the other types, i.e., Types 2,3 and 4.

\textbf{Type 2 (both $s,s'$ are inside $R^t_{j,j+1}$):} In this case the rectangle determined by the bend point of $\ell^{j+1}_t$ and $s$ must be empty (Figures~\ref{options}(b)--(c)). We thus set $\alpha$ to be $y(s')$, and $\beta$ to be 2. Observe now that given $S(p,p',\alpha,2)$, we can find both $s$ and $s'$ in $O(1)$ time\footnotemark[2] by counting the number of ports between $p$ and $p'$, and using $\alpha$.  

\textbf{Type 3 ($s\in R^t_{j,j+1}$ and $s'\not\in R^t_{j,j+1}$):} In this case the rectangle determined by the bend point of $\ell^{j+1}_t$ and $s$ must be empty (Figure~\ref{options}(d)). We thus set $\alpha$ to be $y(s')$, and $\beta$ to be 3. Given $S(p,p',\alpha,3)$, we can recover $s$ and $s'$ using  the range counting data structures\footnotemark[2]. The same argument holds even when $s_x>p'_x$.  

\textbf{Type 4 ($s\not\in R^t_{j,j+1}$ and $s'\in R^t_{j,j+1}$):} In this case the rectangle determined by the bend points of $\ell^j_t$ and $\ell^{j+1}_t$ must be empty (Figure~\ref{options}(e)). We thus set $\alpha$ to be $y(s')$, and $\beta$ to be 4. Given $S(p,p',\alpha,4)$, we can recover $s'$ using $\alpha$. Here we do not need to find $s$ since the solution must lie inside the rectangle determined by $p,p'$ and $s'$.

\begin{figure}
\centering
\includegraphics[width=\textwidth]{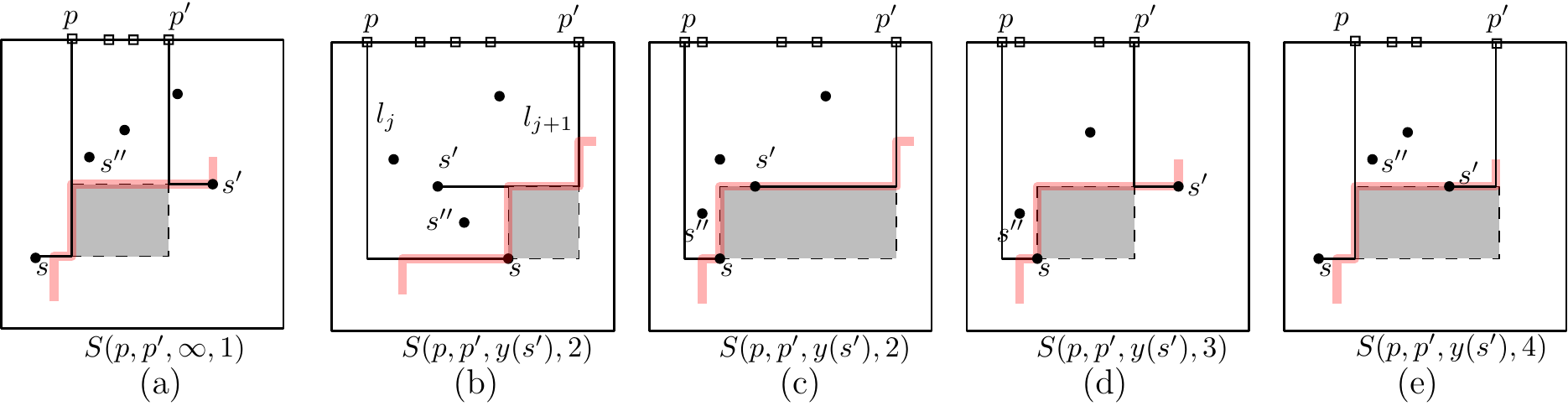}
\caption{(a)--(e) Illustration for different Types of subproblems. }
\label{options}
\end{figure}

\begin{lemma}
\label{lem:bottleneck} 
The solution to the problems $S(p,p',\alpha,\beta)$ can be computed in $O(n^3 \log n)$ time.
\end{lemma}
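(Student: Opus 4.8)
The plan is to treat the encoded quantities $S(p,p',\alpha,\beta)$ themselves as the cells of a secondary dynamic-programming table and to establish two facts: that the table has only $O(n^3)$ cells, and that each cell can be filled in $O(\log n)$ time once the ``smaller'' cells are available. For the first fact, note that $p$ and $p'$ range over the $O(n)$ ports of one side, $\alpha$ is either $\infty$ or one of the $O(n)$ site $y$-coordinates, and $\beta\in\{1,2,3,4\}$; hence there are $O(n^3)$ distinct instances, and carrying this out for both the top and the right side merely doubles the count. Solving each instance independently with the $O(n\log n)$-time algorithm of Benkert et al.~\cite{DBLP:journals/jgaa/BenkertHKN09} would cost $O(n^4\log n)$, so the whole point is to compute all of them jointly at an amortized cost of $O(\log n)$ per instance.

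First I would set up a recurrence that peels a single leader off each $1$-sided instance. Given $S(p,p',\alpha,\beta)$, consider a distinguished free port $q$ immediately inside one of the two fixed bounding leaders. In an optimal crossing-free solution (which exists by Lemma~\ref{lem:benkert}), the site matched to $q$ is essentially \emph{pinned}: the emptiness of the relevant rectangle, together with the type $\beta$ and, where present, the coordinate $\alpha$, leaves only $O(1)$ candidate sites for $q$. Removing the leader of $q$ leaves a $1$-sided instance on the free ports strictly between $q$ and $p'$, which is again of the form $S(q,p',\alpha',\beta')$ for parameters $\alpha',\beta'$ determined by the peeled leader and the residual region. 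Thus $|S(p,p',\alpha,\beta)|$ equals the minimum over the $O(1)$ candidate sites of the length of the peeled leader plus $|S(q,p',\alpha',\beta')|$. Since each step strictly decreases the number of free ports between the two bounding leaders, I would process the table in order of increasing free-port count, with the zero-free-port instances serving as trivial $O(1)$ base cases.

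Each transition touches only $O(1)$ candidate sites, and for each candidate I would (a) test feasibility by checking, via the orthogonal range-counting structure in $O(\log n)$ time, that the residual region contains equally many free ports and free sites; (b) recover the pinned site and the parameters of the residual instance in $O(1)$ time using the constant-time lookup tables built in the footnoted $O(n^3)$-time preprocessing; and (c) read off the already-stored value $|S(q,p',\alpha',\beta')|$ in $O(1)$ time. Hence each of the $O(n^3)$ cells is filled in $O(\log n)$ time, for a total of $O(n^3\log n)$, which dominates the $O(n\log n)$ preprocessing of the range-counting structure and the $O(n^3)$ preprocessing of the lookup tables.

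The hard part will be justifying the peeling recurrence uniformly across the four types: I must show that in an optimal crossing-free labeling the partner of the peeled boundary port is confined to $O(1)$ possibilities, and that after its removal the remaining ports and sites constitute a genuine $S(\cdot)$ instance with correctly updated $\alpha$ and $\beta$. The delicate cases are Types~2 and~3, where one or both of the distinguished sites $s,s'$ lie inside the region, so the residual region may change its type as the leader is removed; here the general-position assumption and Lemma~\ref{lem:benkert} (which lets a swapped pair of leaders be rerouted without increasing the total length) are exactly what keep the branching factor at $O(1)$. Were the branching only boundable by $O(n)$, each transition would cost $O(n\log n)$ and the scheme would degrade to $O(n^4\log n)$; pinning it down to $O(1)$ is the crux of the argument.
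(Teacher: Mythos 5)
Your bookkeeping is right and matches the paper's: there are $O(n^3)$ instances $S(p,p',\alpha,\beta)$, they can be organized as a secondary dynamic program, and the target is $O(\log n)$ amortized work per cell using range counting and table look-ups. But there is a genuine gap at exactly the step you yourself flag as ``the crux'': you never establish that the partner of the peeled element is confined to $O(1)$ candidates, and the mechanism you offer for it---emptiness of the distinguished rectangle together with $\beta$ and $\alpha$---cannot do the job. Those data constrain only the two \emph{bounding} leaders of the instance; they say nothing about which of the $\Theta(n)$ interior sites an extremal port gets matched to in an optimal crossing-free solution, so local emptiness does not pin anything down. Without that fact the recurrence branches over $\Theta(n)$ partners and, as you note, the whole scheme degrades to $O(n^4\log n)$. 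Deferring this is not a stylistic choice: it is the entire content of the lemma beyond the trivial $O(n^4)$ bound, and the paper devotes a separate section to proving it.

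The paper's resolution is also structurally different from your sketch. It peels the \emph{bottommost free site} $s''$ (not a port adjacent to a bounding leader); peeling this site is what makes the two residual pieces again instances of the form $S(\cdot)$, since the vertical segment of $s''$'s leader separates the region. It then defines \emph{good ports} $\bar p$ (those for which the stripe $S(p,\bar p)$ is balanced when $s''$ is matched to $\bar p$) and proves the Good-Port-Separation property (Lemma~\ref{lem:separation}), via global leader-swapping arguments and Lemma~\ref{lem:benkert}: feasibility of $P(p_i,p')$ is monotone along the ordered good ports, and some optimal solution has no leader crossing the vertical line through $p^*$. Only from this does it follow that $s''$ may be assumed to connect to one of the two good ports nearest to it, $p_{\text{left}}$ or $p_{\text{right}}$---the $O(1)$ branching---and these, together with $p^*$, are found by binary search over precomputed sorted good-port lists, which is where the $O(\log n)$ per cell actually comes from. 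Note also that the paper dispatches Types~1 and~4 separately as pure 1-sided instances precomputed with Benkert et al.'s algorithm; the good-port machinery is needed only for Types~2 and~3, precisely the cases you concede are delicate. In short, you have the right complexity skeleton, but the one structural fact that makes it work is asserted rather than proved, and attributed to a local argument that is insufficient.
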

\begin{proof}
First observe that there are $O(n)$ possible choices for each of $p,p',\alpha$, and a constant number of choices for $\beta$. Therefore, we have at most $O(n^3)$ subproblems. 

We can employ a dynamic programming to compute the solution to these problems. The idea is to select the bottommost free point $s''$ and connect it to a port $p''$ between $p$ and $p'$. This splits the problem into two subproblems, which can again be expressed in the form $S(p,p',\alpha,\beta)$.  Since the number of choices for $p''$ is at most $n$, we can compute an entry of the dynamic programming in $O(n)$ time\footnote{We do not recompute the subproblems and perform a table look-up.}. Since the number of entries is $O(n^3)$, the running time is bounded by $O(n^4)$. 

We now show how to compute an entry using only a logarithmic number of queries, which would imply a running time of $O(n^3\log n)$. Note that the problems of Types 1 and 4 can essentially be thought of as an 1-sided problem, and we can precompute the solution of all such cases in $O(n^3\log n)$ time using Benkert et. al.'s~\cite{DBLP:journals/jgaa/BenkertHKN09} algorithm\footnote{For each pair of vertical lines passing through ports, we apply the $O(n\log n)$ algorithm to solve the 1-bend 1-sided problem using the topmost $k$ sites (if exists), where $k$ is the number of free ports between $p$ and $p'$.}.  We may thus assume without loss of generality that the problem is of Type 2 or 3. 

Let $\Gamma_e$ be the rectangle determined by $s$ and the bend point of the leader of $p'$, i.e., the shaded rectangles in Figure~\ref{options}. Note that no leader in the solution of $\lambda(R^t_{j,j+1})$ would intersect   $\Gamma_e$. The union of the horizontal line of the leader of $p'$ and the top  side of $\Gamma_e$ defines a \emph{blocker}, i.e., a segment $ab$ that must not be intersected by any leader in the solution. Figure~\ref{fig:definition}(a)--(b) illustrate a blocker in a thick line segment.  
\begin{figure}[h]
\centering
\includegraphics[width=\textwidth]{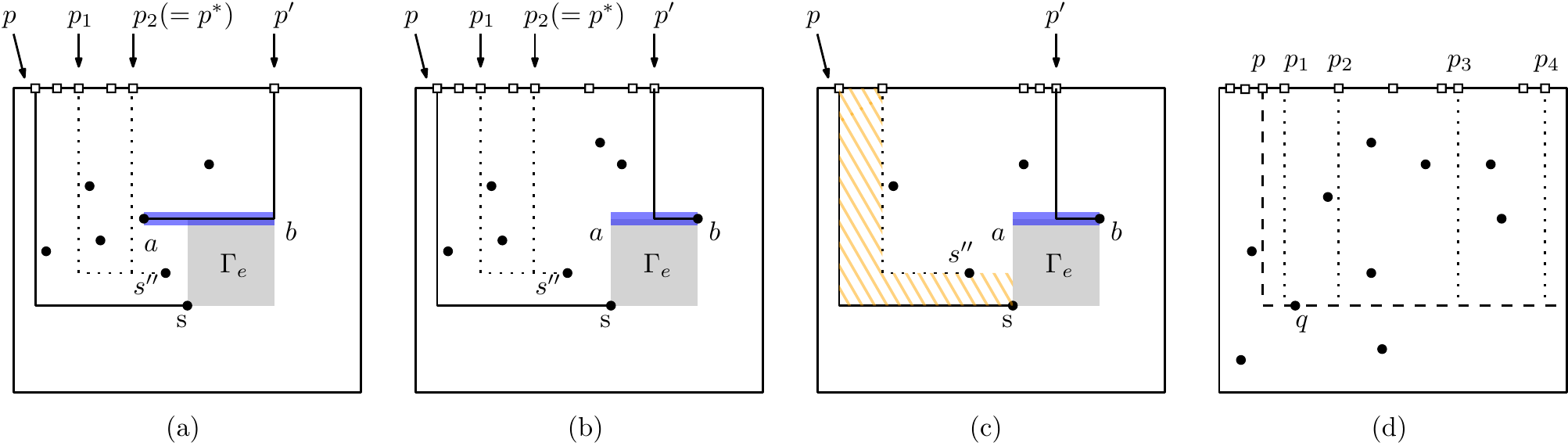}
\caption{Illustration for $\Gamma_e$, blocker, good ports and $p^*$. (a) A scenario when the blocker coincide with the horizontal segment of $\ell^{j+1}_t$. (b) The blocker is determined by the top boundary of $\Gamma_e$. (c) Illustration for Case 1. (d) Precomputation of the list of good ports.}
\label{fig:definition}
\end{figure} 

For any pair of ports $p_i,p_j$, let $S(p_i,p_j)$ be the stripe lying between the leaders incident to the ports $p_i,p_j$ and bounded by $C^u$. We call $S(p_i,p_j)$ a \emph{balanced stripe} if and only if the number of free sites inside the stripe is equal to the number of free  labels between $p_i,p_j$. We call a port $\bar{p}$ (between $p$ and $p'$) a \emph{good port} if $S(p,\bar{p})$ is balanced (assuming that $\bar{p}$ is connected to $s''$), e.g., the ports $p_1$ and $p_2$ in Figures~\ref{fig:definition}(a)--(b). 

Let  $p_1,\ldots,p_k(=p^*)$ be the good ports ordered from left to right, and let $P(p_i,p')$, where $1\le i\le k$, denote the 1-sided boundary labeling problem determined by the leader of $\ell^{j+1}_t$ and the leader connecting $p_i$ to $s''$. By $p^*$ we denote the rightmost good port to the left of the blocker such that the $P(p^*,p')$ has a feasible solution. We now consider two scenarios depending on whether $p^*$ exists or not. 

\textbf{Case 1 ($p^*$ is the port immediately to the right of $p^*$):} In this case we must connect $s''$ to $p^*$, and the region to the left of this leader  must be empty, e.g., see the region with falling pattern in  Figure~\ref{fig:definition}(c). Suppose for a contradiction that $s''$ is connected to some port $\bar{p}$ in the optimal solution such that $\bar{p}_x> p^*_x$. Since $s''$ is the bottommost site, $S(p,\bar{p})$ must be balanced, and hence $p^*_x \ge \bar{p}_x$, a contradiction. Therefore, in this case we need only one table look-up to fill a dynamic programming table entry.  

\textbf{Case 2 (Otherwise):} In this case there can be one or more good  ports. We  prove the following property in Section~\ref{sec:separation}.
 
\begin{enumerate}
\item[]\textbf{Good-Port-Separation:} If $P(p_i,p')$ has a feasible solution, then each of $P(p_1,p'), \ldots,$ $ P(p_{i-1},p')$ has a feasible solution. In addition, there exists an optimal solution of $\lambda(R^t_{j,j+1})$ such that no leader  crosses the vertical line through $p^*$ (except possibly the leaders of $p$ and $p^*$).  
\end{enumerate}
 
By the Good-Port-Separation property, the stripe $S(p,p^*)$ represents a 1-sided boundary labeling problem that can be solved using Benkert et al.'s algorithm~\cite{DBLP:journals/jgaa/BenkertHKN09}. Let $p_{\text{left}}$ and $p_{\text{right}}$ be the closest good ports to the left and right of $s''$,  respectively. We then have to perform at most two table look-up, one for  connecting  $s''$ to $p_{\text{left}}$ and the other for  to $p_{\text{right}}$. Note that if $p^*$ lies to the left of $s''$, then $p_{\text{right}}$ may not exist. 

We now show how to test whether $p^*$ exists, and if so, then how to find $p_{\text{left}}$ and $p_{\text{right}}$; all in  $O(\log n)$ time. We rely on some preprocessing. For every site $q$ (which will play the role of $s''$) and port $p$, we precompute a list of good ports (sorted in increasing order of $x$-coordinates) to the right of the vertical line through $p$, as illustrated in Figure~\ref{fig:definition}(d).  By the   Good-Port Separation  property, given $R_{j,j+1}$, we can perform a binary search on the list of $p'$ to decide whether $p^*$ exists. If $p^*$ exists, then another binary search for the $x$ coordinate of $s''$ would locate  $p_{\text{left}}$ and $p_{\text{right}}$.
\end{proof}

\begin{theorem}
\label{thm:main}
Given a 1-bend 2-sided boundary labeling problem with $O(n)$ sites and labels, one can find a labeling  (if exists) that minimizes the total leader length in $O(n^3\log n)$-time.  
\end{theorem}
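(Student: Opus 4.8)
The plan is to package the lemmas of Section~\ref{sec:tb} into a single dynamic program and argue it returns a minimum-length labeling. First I would invoke the first lemma of Section~\ref{sec:tb} (Kindermann et al.): if the instance is feasible then some length-minimizing labeling is $xy$-separated, so it suffices to search only over $xy$-separated labelings. Every such labeling induces the maximal-rectangle sequence $\mathcal{R}=(R_1,\dots,R_k)$, and the recursive formula for $|\lambda(R_j)|$ expresses the optimum of the subinstance anchored at $\ell^j_t,\ell^j_r$ in terms of the two induced one-sided instances $\lambda(R^t_{j,j+1}),\lambda(R^r_{j,j+1})$ and the smaller two-sided instance $\lambda(R_{j+1})$. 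I would verify that this split is both exhaustive and non-overlapping: since $R_{j+1}$ is an empty maximal rectangle, every port between $\ell^j_t$ and $\ell^{j+1}_t$ is forced onto a site of $R^t_{j,j+1}$ (symmetrically on the right), so the three subinstances partition the remaining ports and sites and their optimal costs add. The port/site count test sends infeasible splits to $\infty$, and the dummy rectangle $R_0$ reduces the entire instance to evaluating $|\lambda(R_0)|$. It then remains to confirm the minimization ranges over a candidate set $\mathcal{C}$ that is guaranteed to contain the true $R_{j+1}$, which is exactly Lemma~\ref{lem:site}: it forces a site onto the top or right side of $R_{j+1}$ and thereby legitimizes the candidate enumeration.

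For the running time I would charge three separate costs whose sum is $O(n^3\log n)$. Building the orthogonal range-counting structure takes $O(n\log n)$, after which any port/site balance query costs $O(\log n)$. Tabulating every one-sided subproblem $S(p,p',\alpha,\beta)$ — the values to which the terms $\lambda(R^t_{j,j+1})$ and $\lambda(R^r_{j,j+1})$ reduce — costs $O(n^3\log n)$ by Lemma~\ref{lem:bottleneck}, and each entry is afterwards retrievable in $O(1)$. Finally, the outer two-sided recursion visits at most one state per candidate rectangle, hence $O(n^3)$ states by Lemma~\ref{lem:cr}; to evaluate a state I would minimize over its candidate successors, where each summand $\lambda(R^t_{j,j+1}),\lambda(R^r_{j,j+1})$ is an $O(1)$ look-up into the precomputed tables and $\lambda(R_{j+1})$ is an already-filled cell, while the relevant successor is located in $O(\log n)$ time by the same good-port binary search that drives Lemma~\ref{lem:bottleneck} rather than by scanning all $\Theta(n)$ candidates. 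The outer recursion therefore also runs in $O(n^3\log n)$, and the three parts combine to the claimed bound.

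The hard part will be the interface between the outer recursion and the precomputed one-sided tables: I must argue that each quantity $\lambda(R^t_{j,j+1})$ (and $\lambda(R^r_{j,j+1})$) arising in a transition is \emph{literally} one of the tabulated entries $S(p,p',\alpha,\beta)$, so that no one-sided instance is ever recomputed during the outer pass. This amounts to checking that Types $1$ through $4$ exhaust every configuration of the endpoint sites $s,s'$ relative to $R^t_{j,j+1}$, and that the encoding $(p,p',\alpha,\beta)$ together with the range-counting queries recovers the exact instance in $O(1)$. The second delicate point is replacing the naive $O(n)$ candidate scan per state by the $O(\log n)$ binary search, which requires that the good-port-separation property transfers from the one-sided setting of Lemma~\ref{lem:bottleneck} to selecting the successor rectangle here; granting that lemma, this separation-plus-binary-search step is the remaining substantive content, and everything else is bookkeeping over the rectangle sequence $\mathcal{R}$.
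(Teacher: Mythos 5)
Your correctness skeleton matches the paper: reduce to $xy$-separated solutions via Kindermann et al., run the rectangle-sequence DP with the dummy rectangle $R_0$, use Lemma~\ref{lem:site} to legitimize the candidate enumeration, and precompute the one-sided subproblems $S(p,p',\alpha,\beta)$ in $O(n^3\log n)$ time via Lemma~\ref{lem:bottleneck}. The gap is in your final complexity argument for the outer two-sided recursion. You propose to evaluate each of the $O(n^3)$ states in $O(\log n)$ time by locating ``the relevant successor'' with a binary search, claiming the Good-Port-Separation property transfers from the one-sided setting to the selection of the successor rectangle. That transfer is not available: Lemma~\ref{lem:separation} gives a \emph{feasibility} monotonicity over the ordered good ports of a one-sided instance (if $P(p_i,p')$ is feasible, so are $P(p_1,p'),\dots,P(p_{i-1},p')$), which is exactly what permits the binary search for $p^*$ inside Lemma~\ref{lem:bottleneck}. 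The outer transition, by contrast, minimizes $|\lambda(R^t_{j,j+1})|+|\lambda(R^r_{j,j+1})|+|\lambda(R_{j+1})|$ over candidate rectangles --- a sum of three optimal costs with no monotone or unimodal structure established in the candidate parameter, and feasibility monotonicity does not let you binary-search for the minimizer of such a function. Without this step, your own accounting gives $O(n^3)$ states times $\Theta(n)$ candidates each, i.e., $O(n^4)$, not $O(n^3\log n)$.

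The paper closes the argument differently, with two observations you are missing. First, an amortization: the candidate rectangles examined while computing one entry all share that entry's bottom-left corner and cannot be candidates for any other entry, so the \emph{total} number of candidate-rectangle queries over the whole run is bounded by the number of distinct candidate rectangles, which is $O(n^3)$ by Lemma~\ref{lem:cr}; a full scan per state is therefore fine, and no per-state $O(\log n)$ search is needed. Second, storage: a table indexed by all pairs of leaders has $O(n^4)$ cells, so even initializing it dominates everything; the paper instead keeps only the relevant entries in a dynamic binary search tree, paying $O(\log n)$ per look-up and insertion, which bounds both memory initialization and queries by $O(n^3\log n)$. (Your ``already-filled cell, $O(1)$ look-up'' silently assumes the $O(n^4)$ table.) In short, the $\log n$ factor in the theorem comes from dictionary access over a sparse state space, not from a binary search over candidate successors.
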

\begin{proof}
Every subproblem $\lambda(R_j)$ can be defined by a pair of leaders, and hence we can define an $O(n^2)\times O(n^2)$ table $T$ to store the solutions to the subproblems. To compute an entry of the table $T$, we look at a set of candidate rectangles with two nice properties. First, all these rectangles have the same  bottom-left corner, and second, none of these rectangles can be a candidate rectangle for any other entry of $T$. Therefore, the number of `candidate rectangle queries' to fill all the entries of $T$ is  bounded asymptotically by the number of distinct candidate rectangles, which is $O(n^3)$ (by Lemma~\ref{lem:cr}). Since we do not recompute solutions, and the table look-up takes $O(1)$ time, the total running time is bounded by $O(n^4)$, which dominates the preprocessing time. 

Observe that the complexity $O(n^4)$ comes from considering all possible pairs of strings, whereas only $O(n^3)$ options are relevant (by Lemma~\ref{lem:cr}). Therefore, instead of a table, we can keep the relevant  entries in a dynamic binary search tree, which increases the cost for solution look-up to $O(\log n)$, but limits the time for both the memory initialization and look-up queries to $O(n^3 \log n)$. Thus the total running time improves to  $O(n^3 \log n)$. 
\end{proof}

\section{Good-Port-Separation Property}
\label{sec:separation}
In this section, we prove the Good-Port-Separation property.

\begin{lemma}[Good-Port-Separation]
\label{lem:separation} 
 If $P(p_i,p')$ has a feasible solution, then each of $P(p_1,p'), \ldots,$ $ P(p_{i-1},p')$ has a feasible solution. In addition,  
 there exists an optimal solution of $\lambda(R^t_{j,j+1})$ such that no leader  crosses the vertical line through $p^*$ (except possibly the leaders of $p$ and $p^*$).
\end{lemma}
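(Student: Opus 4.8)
The plan is to establish the two parts of the statement separately, exploiting the fact that $s''$ is the bottommost free site and that the good ports are defined by a balance (counting) condition. For the first part, I would argue monotonically: the defining property of a good port $p_i$ is that the stripe $S(p,p_i)$ contains exactly as many free sites as there are free ports between $p$ and $p_i$ (once $p_i$ is connected to $s''$), and feasibility of $P(p_i,p')$ means the remaining ports between $p_i$ and $p'$ can be matched to the remaining free sites. First I would show that if $P(p_i,p')$ is feasible then $P(p_{i-1},p')$ is feasible, by taking a feasible labeling for $P(p_i,p')$ and transforming it. Moving from $p_i$ to $p_{i-1}$ removes the ports strictly between $p_{i-1}$ and $p_i$ from the ``upper'' subproblem and hands them, together with the free sites lying in the corresponding sub-stripe, to the ``lower'' region; because both $p_{i-1}$ and $p_i$ are good ports, the counts on both sides remain balanced, so a valid reassignment exists. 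Iterating this step from $p_i$ down to $p_1$ gives the claim for all $P(p_1,p'),\ldots,P(p_{i-1},p')$.

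For the second part, I would start from an arbitrary optimal solution of $\lambda(R^t_{j,j+1})$ and show it can be modified, without increasing total length, into one in which no leader other than possibly those of $p$ and $p^*$ crosses the vertical line $V$ through $p^*$. The key structural observation is that $p^*$ is, by definition, the rightmost good port to the left of the blocker for which $P(p^*,p')$ is feasible; combined with the balance condition, this means the number of free ports to the left of $V$ equals the number of free sites to the left of $V$ in the relevant stripe. Hence a crossing of $V$ by some leader must be accompanied by an opposite crossing (a port on one side of $V$ served by a site on the other side forces a matching discrepancy that can only be resolved by a second leader crossing back). I would then invoke an exchange argument: given a pair of leaders crossing $V$ in opposite directions, swap their site assignments to uncross them at $V$, which by the usual axis-aligned $po$-leader length accounting does not increase the total length, and then apply Lemma~\ref{lem:benkert} to restore a crossing-free labeling of length no larger. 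Repeating until no leader crosses $V$ yields the desired optimal solution.

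The main obstacle I anticipate is the second part, specifically proving that the exchange across $V$ never increases total leader length and that the process terminates. The subtlety is that $s''$ is fixed as the bottommost site and is connected to the good port under consideration, so I must be careful that uncrossing at $V$ does not inadvertently force a crossing elsewhere or disturb the leaders of $p$ and $p^*$ that are explicitly exempted. I would handle termination via a potential argument (e.g.\ the number of leaders crossing $V$, or a lexicographic measure on total length and crossing count strictly decreases at each exchange), and handle the length bound by the standard observation that for $po$-leaders confined to $R^t_{j,j+1}$, swapping the targets of two oppositely-crossing leaders replaces two ``long'' leaders by two ``short'' ones whose combined horizontal and vertical extents do not exceed the original, after which Lemma~\ref{lem:benkert} legitimizes the resulting (possibly crossing) labeling as a crossing-free one of no greater length. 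I would also need to verify that the good-port counting guarantees the two sides of $V$ are independently solvable, which is exactly what the balance condition and the feasibility of $P(p^*,p')$ provide, closing the argument.
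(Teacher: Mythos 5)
Your first part is sound and essentially matches the paper: feasibility propagates to good ports further left because the stripe between two good ports is balanced, and a balanced 1-sided instance admits a crossing-free labeling (this is the Benkert et al.\ feasibility fact the paper invokes). Whether you iterate one good port at a time or, as the paper does, directly splice a solution of the balanced stripe $S(p_a,p_i)$ onto the solution of $P(p_i,p')$ after re-routing $s''$ to $p_a$ is immaterial.

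The genuine gap is in the second part. Your exchange argument --- pair each leader crossing the line $V$ through $p^*$ in one direction with a leader crossing it in the other (the counts match by balancedness of $S(p,p^*)$), swap their sites, then uncross via Lemma~\ref{lem:benkert} --- is precisely the paper's Case~1, and it only works when some site of $\Gamma_3$ (the portion of $R^t_{j,j+1}$ to the right of $V$ and \emph{below} the blocker) is already served by a port to the right of $p^*$. The configuration the paper must treat separately (its Case~2) is when \emph{every} site of $\Gamma_3$ is served by a leader crossing $V$. In that situation, the oppositely-crossing leaders (ports right of $V$, sites left of $V$) necessarily run above the blocker line, so a blind swap hands a site of $\Gamma_3$, which lies below the blocker, to a port whose vertical segment would then have to pierce the blocker or enter the forbidden empty rectangle $\Gamma_e$. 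The swapped assignment is simply not realizable by a legal $po$-leader: the obstruction is feasibility, not length accounting, and your potential-function/termination discussion does not touch it. For the same reason you cannot just ``apply Lemma~\ref{lem:benkert} to the right side of $V$'': that side is not a rectangular 1-sided instance, since it contains the blocker as an internal obstacle. Repairing this requires exactly the hypothesis your argument never uses, namely that $P(p^*,p')$ is feasible: the paper extracts from such a feasible solution the ports $q_1,\ldots,q_w$ in $\Gamma_2$ (between $p^*$ and the blocker, hence able to reach $\Gamma_3$ legally), introduces a second vertical line $L'$ through $q_w$ together with a balanced rectangle $B$, and performs a two-stage swap across $V$ and $L'$ before uncrossing. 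Without this machinery, your proof is incomplete in the hardest case.
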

\begin{proof} Suppose for a contradiction that $P(p_a,p')$ is not feasible, where $a<i$. We now construct a feasible solution as follows: We first construct a 1-sided solution of $S(p,p_a)$ (such a solution exists since the stripe is balanced~\cite{DBLP:journals/jgaa/BenkertHKN09}), e.g., see Figure~\ref{fig:split}(a). We know that $S(p_i,p')$ has a feasible  solution, where by definition, $p_i$ is connected to the bottommost point $s''$. We take such a solution of $P(p_i,p')$ and remove the leader connecting $p_i$ and $s''$. We then connect $s''$ to $p_a$, and the stripe $S(p_a,p_i)$ (including $p_i$) becomes balanced, e.g., see Figure~\ref{fig:split}(b). Consequently, we can construct a feasible labeling for $S(p_a,p_i)$, which contradicts that $P(p_a,p')$ is not feasible. 
 
We now show that there exists an optimal solution of  $\lambda(R^t_{j,j+1})$ such that no leader would cross the vertical line $L$ through $p^*$ (except possibly the leaders of $p^*$ and $p$). Assume that there exists an optimal solution $S$, where some of the leaders intersect $L$. Note that it would suffice to transform this solution into another solution $S^*$ that respects the separation by $L$, and does not increase the sum of leader length, i.e., $cost(S^*)\le cost(S)$. We compute the transformation in two phases. In the first phase, we transform $S$ into some solution $S'$ with $cost(S')\le cost(S)$ that respects the separation, but possibly contains crossings among leaders. In the next phase we transform $S'$ into some solution $S^*$ with $cost(S^*)\le cost(S')$, which is crossing free and respects the separation.

We first split $R^t_{j,j+1}$ into some regions as follows: Let $ab$ be the blocker segment. 
 Define $\Gamma_1$ to be the rectangle determined by $a,p'$, $\Gamma_2$ be the rectangle determined by $a,p^*$, and $\Gamma_3$ be the rectangle bounded by the four lines determined by the horizontal segments of   $\ell^j_t$ and $\ell^{j+1}_t$, and the vertical lines through $p^*$ and $p'$. Figure~\ref{fig:split}(c) illustrates $\Gamma_1, \Gamma_2$ and $\Gamma_3$. We now consider the following scenarios. 

\begin{figure}[h]
\centering
\includegraphics[width=\textwidth]{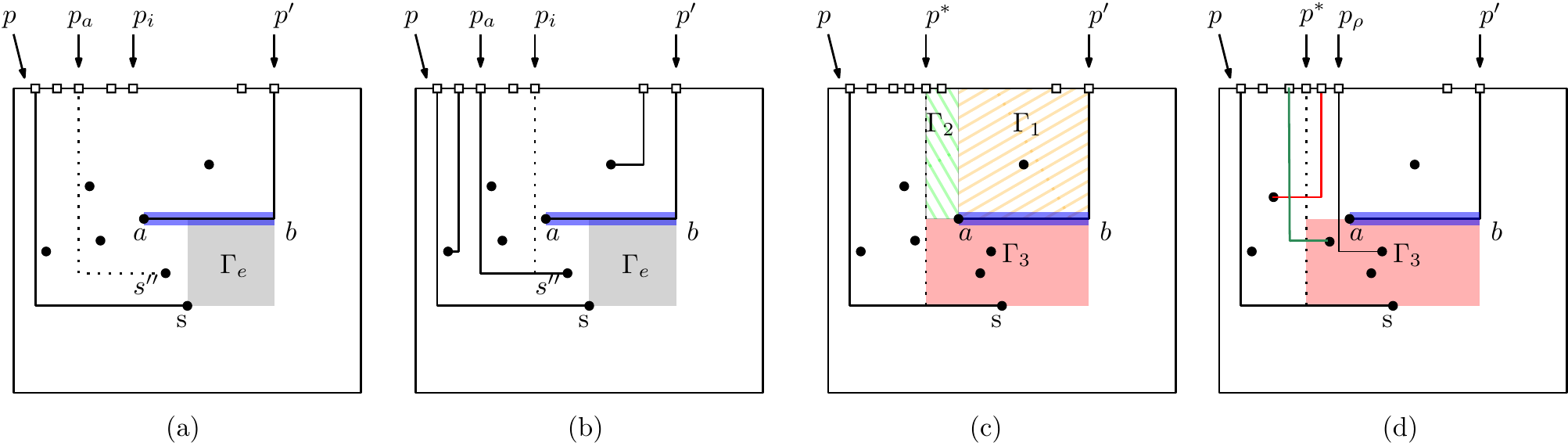}
\caption{Illustration for Lemma~\ref{lem:separation}. (a)--(b) Construction of a feasible solution for $P(p_a,p')$. (c)  $\Gamma_1, \Gamma_2$ and $\Gamma_3$. (d) Case 1.}
\label{fig:split}
\end{figure} 

\textbf{Case 1:} First consider the case when there exists a site in $\Gamma_3$ whose leader does not intersect $L$. Let $p_\rho$ be the port incident to this leader. Note that among the ports that lie to the right of $L$, only the leaders incident to the ports between $p^*$  and $p_\rho$ may properly\footnote{The leader incident to $p^*$ intersects $L$, but we do not consider that as a proper intersection.} intersect $L$ (e.g., see Figure~\ref{fig:split}(d)). Since $S(p,p^*)$ is balanced, if there is a set of $k$ such leaders intersecting $L$, then there must be another set of $k$ leaders that are incident to the ports to the left of $p^*$ and intersect $L$. Figure~\ref{fig:split}(d) illustrates these leaders in red and green, respectively. We now arbitrarily compute $k$ pairs, taking one leader from each set, and swap their sites. Such a swap may create edge crossings (e.g., see Figures~\ref{fig:transformation}(a)--(b)). However, they do not increase the total leader length, as illustrated in Figures~\ref{fig:transformation}(c)--(f). This yields the solution $S'$ that respects the separation by $L$. 
 The solution to the left and to the right of $L$ now can independently be modified to remove the crossing by using  a result of Benkert et al.~\cite[Lemma 1]{DBLP:journals/jgaa/BenkertHKN09}.
\begin{figure}[h]
\centering
\includegraphics[width=\textwidth]{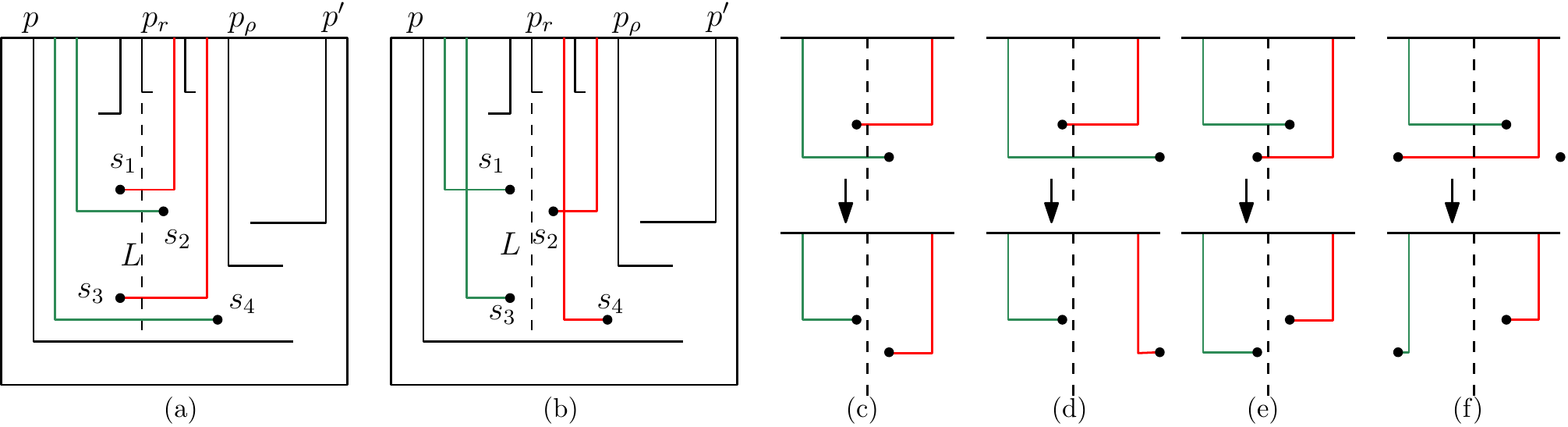}
\caption{(a)--(b) Illustration for the construction in Case 1. (c)--(f) Swap of the site assignments does not increase the total leader length.}
\label{fig:transformation}
\end{figure}

\begin{figure}
\centering
\includegraphics[width=\textwidth]{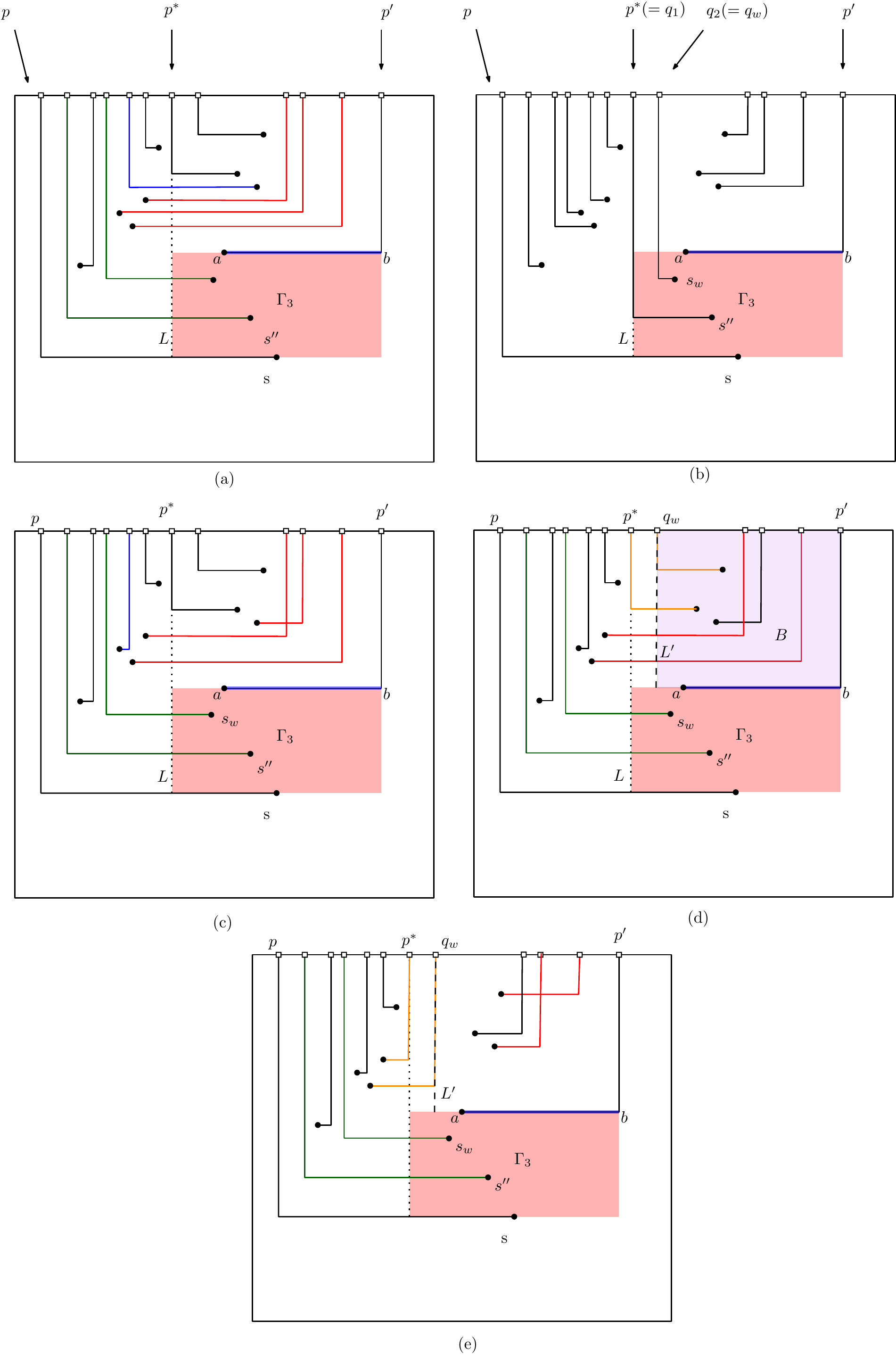}
\caption{Illustration for the proof of Lemma~\ref{lem:separation}.}
\label{fig:split2}
\end{figure}

\textbf{Case 2:} Consider now the case when all the  sites of $\Gamma_3$   intersect $L$. Let $w$ be the number of sites interior to $\Gamma_3$, and let $\mathcal{L}_w$ be the set of leaders incident to these sites (the green leaders in Figure~\ref{fig:split2}(a)). Recall that  $P(p^*,p')$ has a feasible solution. Let $q_1,\ldots,q_w$ be the ports from left to right in $\Gamma_2$ that are connected to the sites in $\Gamma_3$ in an optimal solution of  $P(p^*,p')$, e.g., see  Figure~\ref{fig:split2}(b). Let $s_w$ be the port incident to $q_w$.

Since the leaders in $\mathcal{L}_w$ are crossing $L$, any other leader crossing $L$ and incident to a port to the right of $L$, must lie above the line determined by the blocker. Assume that there are $w'\ge w$ such leaders and define $\mathcal{L}_{w'}$ to be the set that consists of these leaders  (i.e., the red leaders in  Figure~\ref{fig:split2}(a)). 
 Consequently, we must have another set of $w''= (w'-w)$ leaders that are incident to the ports lying to the left of $L$ and intersecting $L$ (i.e., the blue leader in  Figure~\ref{fig:split2}(a)). Denote these set of leaders by $\mathcal{L}_{w''}$. Since the leaders in $\mathcal{L}_w$ cover all the points in $\Gamma_3$, each leader in $\mathcal{L}_{w''}$ must lie above the line determined by the blocker.

We now construct $w''$ pairs of leaders, taking one leader from $\mathcal{L}_{w''}$ and one from $\mathcal{L}_{w'}$, and for each pair, we swap their sites. Similar to Case 1, these swaps do not increase the total leader length, but may introduce crossings among the leaders  (Figure~\ref{fig:split2}(c)). However, at the end of all these swaps, we only have $\mathcal{L}_w$ and a set  $\mathcal{L}  \subseteq \mathcal{L}_{w'}$ of $w$ leaders crossing $L$.

Let $L'$ be the vertical line through  $q_w$. Assume that there are  $\alpha$   leaders that are incident to the ports in $\Gamma_1$ and cross $L'$. Recall that  the 1-sided problem determined by $p'$ and the leader connecting $q_w,s_w$   has a feasible solution. Therefore, the rectangle $B$ bounded by $R(t), R(l)$, $L'$ and the blocker is balanced (e.g., see Figure~\ref{fig:split2}(d)). Thus there must be exactly $\alpha$  leaders that connect ports from the left of $L'$ and enter $B$ (e.g., the orange leaders in Figure~\ref{fig:split2}(d)). We can swap the sites of these two sets of leaders (as we did for $L$) without increasing the total leader length (e.g., see Figure~\ref{fig:split2}(e)). It is now straightforward to compute a planar solution inside $B$ using Benkert et al.'s algorithm~\cite[Lemma 1]{DBLP:journals/jgaa/BenkertHKN09}. 

Note that the newly formed $\mathcal{L}$, i.e., the leaders that are  incident the ports on the right half-plane of $L$ and cross $L$, now lie entirely in $\Gamma_2$. Therefore, we can swap the sites of the sets  $\mathcal{L}_w$ and $\mathcal{L}$, and then remove the crossings on the left and right half-planes of $L$ independently  using  Benkert et al.'s algorithm~\cite[Lemma 1]{DBLP:journals/jgaa/BenkertHKN09}. 

\textbf{Case 3:} The remaining case is when $\Gamma_3$ is empty, which is straightforward to process in the way we handled Case 1.  
\end{proof}

\section{Relating Boundary Labeling to Outerstring Graphs}
In this section, we reduce the boundary labeling problem to the independent set problem on a class of weighted geometric intersection graphs in the plane called outerstring graphs. We show that if one can discretize  a boundary labeling problem such that the number of candidate leaders is a polynomial in $n$, then our approach will yield a polynomial-time algorithm for the problem.

An \emph{outerstring graph} is an  intersection graph of a set of curves in the Euclidean plane that lie inside a polygon such that one of the endpoints of each curve is attached to the boundary of the polygon. 
 Keil et al.~\cite{KeilMPV17} gave an $O(N^3)$-time algorithm for the maximum-weighted independent set problem on outerstring graphs. The algorithm requires an outerstring graph as an input, where   each curve is given as a polygonal line (i.e., a chain of straight line segments) and $N$ is the number of segments in the representation. We show that by discretizing the boundary labeling problem and assigning an appropriate weight to each candidate leader, one can reduce the boundary labeling problem to the maximum-weighted independent set problem on outerstring graphs. As an example, here we show the reduction for the boundary labeling problem using $opo$-leaders in the presence of obstacles.

\subparagraph{Boundary Labeling with  Orthogonal Obstacles:} Fink and Suri~\cite{DBLP:conf/cccg/0001S16} gave $O(n^9)$ and $O(n^{21})$-time algorithms for the opposite 2-sided boundary labeling with $po$ and $opo$-leaders, respectively. Our approach will yield $O(n^6)$ and $O(n^{12})$-time algorithms for $po$- and $opo$-leaders, respectively, irrespective of the labeling model (opposite, adjacent, or for any port distribution on the boundary). 
 For the opposite 2-sided case,  the running time reduces to $O(n^6)$ and $O(n^{9})$ (for $po$- and $opo$-leaders, respectively). This will settle the time complexity question of 1-bend 3- and 4-sided boundary labeling~\cite{DBLP:journals/algorithmica/KindermannNRS0W16}.  In the rest of this section, we relax the general position assumption and denote $n$ to be the total number of sites and obstacle vertices.

First consider the case of $po$-leaders. Let $I$ be an instance of the boundary labeling. Given a site and a port, there is at most one way of connecting them. Let $M$ denote the set of all possible leaders that do not intersect any obstacle. Then $|M|\in O(n^2)$. It is straightforward to compute $M$ in $O(n^3)$ time. Observe that  each leader $l\in M$ can be viewed as an outerstring, and let $st(l)$ be the corresponding outerstring. Let $\len{l}$ be the length of the leader $l$, and define $x:=\max_{l\in M}\len{l}$ and $y:=\min_{l\in M}\len{l}$. Let $C$ be a number such that $C > nx-(n-1)y> 0$. For each leader $l\in M$, we assign a weight $w(st(l))$ to $st(l)$, where  $w(st(l)):=C-\len{l}$. The following lemma and Keil et al.'s~\cite{KeilMPV17} algorithm  lead  us to the results for $po$-leaders (Theorem~\ref{thm:Opposite2Sided}).
\begin{lemma}
\label{lem:outerstring}
$I$ has a feasible solution with total leader length $L$ if and only if the corresponding outerstring graph $G_I$ has a feasible solution with total weight $(nC-L)$.
\end{lemma}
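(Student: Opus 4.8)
The plan is to set up a correspondence between feasible labelings of $I$ and independent sets of exactly $n$ strings in $G_I$, and then show that the length-to-weight transformation $w(st(l)) = C - |l|$ turns ``minimize total length'' into ``maximize total weight.'' First I would record the structural meaning of adjacency in $G_I$: since each string $st(l)$ has the site and the port of $l$ as its endpoints, two leaders that share a common site, or share a common port, have strings meeting at that shared endpoint and are therefore adjacent; two leaders whose interiors cross are adjacent as well. Hence an independent set of $G_I$ is precisely a collection of pairwise non-crossing, obstacle-avoiding leaders that use pairwise distinct sites and pairwise distinct ports---that is, a non-crossing partial matching between sites and ports (obstacle-avoidance being built into the definition of $M$). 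Consequently an independent set of $n$ strings is exactly a feasible labeling, and every independent set has at most $n$ strings because each string occupies a distinct site.

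For the forward direction I would take a feasible labeling of total length $L$. It is a set of $n$ pairwise non-crossing obstacle-avoiding leaders forming a perfect matching, so the corresponding $n$ strings form an independent set, whose weight telescopes to $\sum_{l}(C - |l|) = nC - L$.

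For the converse, the crux is to show that an independent set attaining weight $nC - L$ (with $L \le nx$, the only regime relevant to genuine labelings) must contain exactly $n$ strings, so that it is a perfect matching and hence a feasible labeling of length exactly $L$. Since independent sets have size at most $n$, it suffices to rule out sizes at most $n-1$. This is where the choice $C > nx - (n-1)y$ is used: any set of at most $n-1$ strings has weight at most $(n-1)C - (n-1)y$, whereas $nC - L \ge nC - nx$, and the inequality $C > nx - (n-1)y$ is equivalent to $nC - nx > (n-1)C - (n-1)y$. Thus no set of size at most $n-1$ can reach weight $nC - L$, forcing the independent set to have exactly $n$ strings and total length $L$.

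The main (indeed the only delicate) obstacle is this weight-gap argument: verifying that the bound $C > nx - (n-1)y$ is exactly strong enough to make every complete matching strictly heavier than every incomplete one, so that maximum-weight independent sets coincide with minimum-length labelings. The geometric half of the correspondence is routine once the ``shared endpoint forces adjacency'' observation is in place. I would also note in passing that this gap property is precisely what lets Keil et al.'s~\cite{KeilMPV17} maximum-weight independent set algorithm return a minimum-length labeling, which is what the ensuing theorem needs.
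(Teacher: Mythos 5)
Your proposal is correct and follows essentially the same route as the paper's proof: the forward direction is the same telescoping weight computation, and the converse uses exactly the same weight-gap argument, bounding any independent set of at most $n-1$ strings by $(n-1)(C-y)$ and invoking $C > nx-(n-1)y$ to force $|S|=n$. The only differences are presentational: you spell out explicitly that shared endpoints force adjacency in $G_I$ and that the converse operates in the regime $L\le nx$, both of which the paper leaves implicit.
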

\begin{proof}
A feasible solution $S$ of $I$ with total leader length $L$ gives a feasible solution for $G_I$ with total weight
$
\sum_{l\in S}w(st(l))=\sum_{l\in S}(C-\len{l})=nC-\sum_{l\in S}\len{l}=(nC-L).
$

We now assume  that $G_I$ has a feasible solution $S'$ with total weight $W = (nC-L)$, and show that the corresponding leaders $S$ yields a feasible solution of $I$ of total leader length $L$. Since $S'$ is an independent set, the  leaders in $S$ are crossing-free,  as well as no site or port is incident to more than one leader. It now suffices to show that every site is connected to a string, i.e., $|S|=n$ and the total leader length is $L$. Observe that $W = nC-L \geq nC-nx > nC-(n-1)y-C = (n-1)(C-y)$. If $|S|< n$, then the total leader length is at most $(n-1)x$, and $S'$ has weight at most $(n-1)(C-y)$, which contradicts that $W> (n-1)(C-y)$. Therefore, $|S|=n$, and we have
\[
W=\sum_{s\in S'}w(s)=\sum_{s\in S'}(C-\len{l_i})=nC-\sum_{l\in S'}\len{l}.
\] 
Since $W=(nC-L)$, we have $\sum_{l\in S'}= L$.
\end{proof}

\begin{theorem}
\label{thm:Opposite2Sided}
The boundary labeling problem can be solved in $O(n^6)$ time using $po$-leaders, for both adjacent and opposite sided models, even in the presence of obstacles (where $n$ is  the total number of sites and vertices of the obstacles).
\end{theorem}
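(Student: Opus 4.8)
The plan is to turn the boundary-labeling instance $I$ into a single instance of maximum-weight independent set on an outerstring graph and then invoke the algorithm of Keil et al.~\cite{KeilMPV17}, reading off the optimal labeling from the optimal independent set via Lemma~\ref{lem:outerstring}. First I would build the graph $G_I$ explicitly: compute the set $M$ of all obstacle-avoiding $po$-leaders in $O(n^3)$ time as already noted, realize each leader $l\in M$ as its outerstring $st(l)$, and assign the weight $w(st(l)):=C-\len{l}$ with $C$ chosen as before. The key structural observation is that a $po$-leader has a single bend, so $st(l)$ is a polygonal chain of only two segments whose port-endpoint lies on $\partial R$; since $\len{M}=O(n^2)$, the total number of segments in the representation is $N=O(n^2)$. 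This is exactly the quantity that governs the running time of Keil et al.'s algorithm.

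Next I would run the $O(N^3)$-time maximum-weight independent set algorithm on $G_I$. Substituting $N=O(n^2)$ gives $O(n^6)$, which dominates the $O(n^3)$ preprocessing, so the overall time is $O(n^6)$. It then remains to extract an optimal labeling and to decide feasibility. By the weight analysis in the proof of Lemma~\ref{lem:outerstring}, any independent set of weight exceeding the threshold $(n-1)(C-y)$ must contain exactly $n$ strings, i.e., it corresponds to a complete, crossing-free, obstacle-avoiding assignment of every site to a distinct port. Thus I would compare the optimum weight $W$ returned by the algorithm against this threshold: if $W>(n-1)(C-y)$ then $I$ is feasible, and since weight and length satisfy $W=nC-L$, the returned independent set is precisely a minimum-total-length labeling of length $L=nC-W$; otherwise no feasible labeling exists.

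Finally I would observe that nothing in the construction used where the ports are placed: the set $M$, the weights, and the outerstring realization depend only on the site--port pairs and the obstacles, so the same $O(n^6)$ algorithm handles the adjacent model, the opposite model, and arbitrary distributions of ports along $\partial R$ uniformly. The one point requiring care is the validity of the outerstring representation in the presence of obstacles: I must argue that the leaders of $M$, which by construction already avoid all obstacles, form a legitimate outerstring system with host boundary $\partial R$ and every string attached there through its port, so that Keil et al.'s algorithm applies verbatim. I expect this bookkeeping---confirming the two-segments-per-string bound that pins down $N=O(n^2)$, together with the threshold comparison that converts ``maximum weight'' into ``feasible and minimum length''---to be the only nonroutine part, the remainder being a direct application of Lemma~\ref{lem:outerstring} and the cited result.
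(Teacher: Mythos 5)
Your proposal is correct and follows essentially the same route as the paper: discretize to the set $M$ of $O(n^2)$ obstacle-avoiding $po$-leaders, weight each outerstring by $C-\len{l}$, invoke Lemma~\ref{lem:outerstring} to equate minimum-length labelings with maximum-weight independent sets, and run Keil et al.'s $O(N^3)$ algorithm with $N=O(n^2)$ to obtain $O(n^6)$. Your explicit treatment of the feasibility threshold $(n-1)(C-y)$ and the two-segments-per-string bound just makes precise what the paper leaves implicit.
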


Consider now the case for $opo$-leaders. For opposite 2-sided case, Fink and Suri~\cite{DBLP:conf/cccg/0001S16} showed that one can discretize the problem such that if there exists a feasible solution, then there is one where the $x$-coordinate of the middle segment of every leader lies in the set of all $x$-coordinates of the sites and obstacle vertices. Therefore, we have $O(n)$ potential leaders for each port-site pair, and thus $O(n^3)$ leaders in total. 
 Hence applying Keil et al.'s~\cite{KeilMPV17} algorithm gives a running time of $O(n^{9})$.
 
The discretization of~\cite{DBLP:conf/cccg/0001S16}  does not apply to the 3- and 4-sided case. However, consider a grid $H$ determined by the  axis-aligned lines  through the ports, sites and obstacle vertices. For each pair of consecutive parallel lines of $H$, place a set of $n$ parallel lines in between. Let the resulting grid be $H'$. If there is a feasible solution to the boundary labeling problem, then for any pair of consecutive parallel vertical lines $\ell,\ell'$ (similarly for horizontal) of $H$, we can have at most $n$ middle vertical segments of the leaders. We thus can distribute them by moving  horizontally to the $n$ lines of $H'$ (e.g.,  see~\cite{DBLP:conf/cccg/0001S16}), which does not change the total leader length. By construction, there is no site, port or obstacle vertex between $\ell$ and $\ell'$. Hence such a modification can be performed without introducing any crossing. Since $H'$ is an $O(n^2)\times O(n^2)$ grid and since we have $O(n^2)$ potential leaders for each port-site pair, the number of candidate leaders is  $O(n^4)$. Hence applying Keil et al.'s~\cite{KeilMPV17} algorithm gives a running time of $O(n^{12})$.

\begin{theorem}
\label{thm:2Sided}
The adjacent boundary labeling problem can be solved in $O(n^{12})$ time using $opo$-leaders, even in the presence of obstacles (where $n$ is the total number of sites and vertices of the obstacles). For opposite 2-sided models, the running time reduces to $O(n^9)$.  
\end{theorem}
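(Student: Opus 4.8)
The plan is to reduce the $opo$-leader labeling problem to a maximum-weighted independent set instance on an outerstring graph, exactly as in the $po$-leader case, and then control the running time through the size $N$ of the string representation handed to Keil et al.'s $O(N^3)$-time algorithm~\cite{KeilMPV17}. First I would observe that the correctness reduction of Lemma~\ref{lem:outerstring} uses nothing specific to $po$-leaders: the weight $w(st(l))=C-\len{l}$ and the choice $C>nx-(n-1)y$ depend only on leader lengths and on the fact that an independent set corresponds to a crossing-free, endpoint-disjoint family of leaders. Two candidate $opo$-leaders sharing a site or a port intersect at that common endpoint, so selecting at most one leader per site and per port is automatically enforced by independence, and the weight threshold still forces all $n$ sites to be matched. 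Hence, once a finite candidate set $M$ of obstacle-avoiding $opo$-leaders is fixed, a minimum-length labeling is recovered from a maximum-weighted independent set in the intersection graph $G_I$ of the strings $\{st(l):l\in M\}$, with enclosing polygon $R$ and each port endpoint lying on the boundary.

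The core of the argument, and the main obstacle, is the discretization that keeps $|M|$ polynomial while preserving the optimum. For the opposite 2-sided model I would invoke Fink and Suri's discretization~\cite{DBLP:conf/cccg/0001S16}, which guarantees an optimal solution whose middle segments have $x$-coordinates drawn from the $O(n)$ coordinates of sites and obstacle vertices; this yields $O(n)$ candidates per port-site pair and $|M|=O(n^3)$. For the adjacent (and general 3-/4-sided) case this discretization fails, so I would instead argue via the refined grid $H'$: starting from the grid $H$ through all ports, sites and obstacle vertices, I insert $n$ fresh parallel lines into every gap between consecutive parallel lines of $H$. The key claim is that any feasible solution can be transformed, without changing total length and without creating crossings, into one whose middle segments all lie on lines of $H'$. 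This rests on the geometric fact that sliding a middle (say vertical) segment horizontally leaves the total horizontal extent of an $opo$-leader unchanged, hence preserves its length, together with the emptiness of each $H$-gap: at most $n$ middle segments cross a given gap, no site, port or obstacle vertex lies inside it, so the $\le n$ segments can be spread injectively onto the $n$ fresh lines without collisions. Establishing this length- and planarity-preserving redistribution rigorously---simultaneously in the vertical and horizontal orientations, since ports now sit on several sides---is the delicate step.

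Granting the discretization, I would finish by counting. The grid $H'$ is $O(n^2)\times O(n^2)$, so a port-site pair admits $O(n^2)$ candidate middle-segment positions and $|M|=O(n^4)$ in the adjacent case (respectively $|M|=O(n^3)$ in the opposite case). Each $opo$-leader is a polyline of three segments, so the string representation has $N=O(|M|)$ segments, namely $N=O(n^4)$ in the adjacent case and $N=O(n^3)$ in the opposite case. Since $M$ can be generated in time polynomial in its size by testing obstacle-avoidance for each of the polynomially many leaders, the total cost is dominated by Keil et al.'s $O(N^3)$ step, which evaluates to $O(n^{12})$ for the adjacent model and $O(n^{9})$ for the opposite 2-sided model, as claimed.
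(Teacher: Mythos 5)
Your proposal follows essentially the same route as the paper: the same reduction to maximum-weighted independent set on an outerstring graph with length-based weights (extending Lemma~\ref{lem:outerstring} to $opo$-leaders), the same use of Fink and Suri's discretization for the opposite 2-sided case giving $O(n^3)$ candidate leaders, and the same refined grid $H'$ (inserting $n$ lines per gap of the grid $H$ through sites, ports and obstacle vertices) giving $O(n^4)$ candidates in the adjacent case, followed by Keil et al.'s $O(N^3)$ algorithm to obtain the $O(n^{12})$ and $O(n^9)$ bounds. The redistribution step you flag as delicate is handled in the paper at the same level of detail, via the observation that each gap of $H$ contains no site, port or obstacle vertex, so the at most $n$ middle segments crossing it can be slid onto the fresh lines without changing lengths or creating crossings.
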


\subparagraph{Sliding Port and Bend Minimization:} 
The outerstring-graph approach can also be applied to the sliding port model, where each label is assigned a distinct  interval on the boundary of $R$ and a site can be connected to any point of an interval. The goal here is to minimize the total leader length or the number of bends. We only need to discretize the problem such that the number of strings that we need to consider is a polynomial in $n$. Define $H$ to be a grid determined by the axis-aligned lines through sites, interval boundaries and obstacle vertices. Construct $H'$ from $H$ by introducing for  every pair of consecutive parallel lines of $H$, a set of $2n$ parallel  lines in between. 

The grid $H'$ can be used to discretize the problem, as follows. The segments incident to the sites are already on $H$. Consider now  a vertical (similarly for horizontal) segment $\ell$  that is incident to an interval $I$, but not incident to any site. Let $\ell'$ and $\ell'$ be a pair of consecutive horizontal lines of $H$ such that $\ell$ lies between them. There can be at most $2n$ horizontal lines  between $\ell,\ell'$, which we can distribute to the lines of $H'$ by moving vertically (e.g.,  see~\cite{DBLP:conf/cccg/0001S16}). Since there cannot be any site, interval boundary or obstacle vertex between $\ell,\ell'$, such a modification neither introduce crossings nor increase the number of total bends. By the construction of $H$, the boundary of $R$ between $\ell,\ell'$ lies in the interval $I$. Hence  $\ell$ will still be incident to $I$. Finally, the middle segments of the leaders can be processed in the same way as we did for Theorem~\ref{thm:2Sided}. It is straightforward to observe that the number of potential strings is a polynomial in $n$. We can now assign certain weights to these strings such that the maximum-weight independent set of the corresponding outerstring graph yields a minimum-bend solution for the boundary labeling problem.

We first consider the case of $po$-leaders. Let $I$ be an instance of this problem. Consider the set $M$ of outerstrings as before. For each outerstring $st(l)\in M$, we assign the weight $w(st(l))$, where
\begin{equation}
  w(st(l))=\begin{cases}
    n+2, & \text{if $l$ has no bends}.\\
    n+1, & \text{if $l$ has one bend}.
  \end{cases}
\end{equation}
This forms our instance $G_I$ of an outerstring graph on which we solve the maximum-weighted independent set problem by running  Keil et al.'s  algorithm~\cite{KeilMPV17}.
\begin{lemma}
\label{lem:poBendMinimization}
Let $I$ be an instance of the boundary labeling problem with $po$-leaders. Then $I$ has a feasible solution  with $k$ bends if and only if the instance $G_I$ has a feasible solution $W$ with total weight at least $(n^2+2n-k)$.
\end{lemma}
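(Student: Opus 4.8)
The plan is to mirror the proof of Lemma~\ref{lem:outerstring}: replace the length-based weights by the bend-based weights of Equation~(1) and reduce everything to one weight-counting identity together with a cardinality argument that forces any sufficiently heavy independent set of $G_I$ to be a complete labeling. The construction of $G_I$ is taken as given, so I will freely use that an independent set corresponds to a set of pairwise crossing-free leaders incident to distinct sites and distinct intervals, and that two strings sharing a site (or ending in the same interval) intersect in $G_I$.

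First I would prove the forward direction by direct counting. Suppose $I$ has a feasible labeling $S$ with exactly $k$ bends. Since $S$ is crossing-free and no two of its leaders share a site or an interval, the corresponding outerstrings form an independent set of $G_I$. Writing $b_0$ for the number of bendless leaders and $b_1$ for the number of one-bend leaders, we have $b_0+b_1=n$ and, because each $po$-leader contributes $0$ or $1$ bend, $k=b_1$. Summing the weights gives
\[
\sum_{l\in S}w(st(l))=(n+2)b_0+(n+1)b_1=n(b_0+b_1)+2b_0+b_1=n^2+n+b_0=n^2+2n-k,
\]
using $b_0=n-b_1=n-k$. Hence $G_I$ has a feasible solution of total weight exactly $n^2+2n-k$, which is in particular at least $n^2+2n-k$.

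For the converse, I would first argue that any independent set $S'$ with weight $W\ge n^2+2n-k$ must select exactly $n$ strings. Each site is an endpoint of at most one chosen string, so $|S'|\le n$; and every string carries weight at most $n+2$, so if $|S'|\le n-1$ then $W\le(n-1)(n+2)=n^2+n-2$, which is strictly smaller than $n^2+2n-k$ whenever $k\le n$. As there are $n$ leaders each with at most one bend, $k\le n$, ruling out this case and forcing $|S'|=n$. Thus every site is connected, and the independence of $S'$ (with the construction of $G_I$) yields a genuine feasible labeling $S$ of $I$ on distinct sites and distinct intervals. Applying the same counting identity to $S$ gives $W=n^2+2n-k'$, where $k'$ is the number of bends of $S$; combined with $W\ge n^2+2n-k$ this forces $k'\le k$, so $I$ admits a feasible labeling with at most $k$ bends.

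I expect the cardinality step in the converse to be the crux rather than the arithmetic. The whole point of inflating each weight by the additive term $n$ is to make the $\Theta(n)$ penalty for omitting even a single string dominate the bend-dependent spread between weights, which is only $1$; this is exactly what guarantees that a maximum-weight independent set is a \emph{complete} labeling and lets the objective double as a bend count. I would also verify the extreme case $k=n$, where the strict inequality $n^2+n-2<n^2+2n-k=n^2+n$ still separates $|S'|=n$ from $|S'|<n$, confirming the threshold works for every admissible $k$. The facts that strings from a common site, or into a common interval, intersect in $G_I$ are inherited from the outerstring construction and need only be invoked here, not re-established.
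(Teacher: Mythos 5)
Your proposal is correct and follows essentially the same route as the paper: the forward direction by summing the weights, and the converse by first forcing $|S'|=n$ via the bound $(n-1)(n+2)=n^2+n-2<n^2+2n-k$ (using $k\le n$), then bounding the bend count. The only cosmetic difference is that the paper finishes the converse by contradiction (assuming at least $k+1$ one-bend leaders and bounding the weight), whereas you use the exact identity $W=n^2+2n-k'$ to conclude $k'\le k$ directly; these are the same argument.
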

\begin{proof}
Let $S$ be a feasible solution of $I$. Clearly, the strings corresponding to the leader of $S'$ is a feasible solution for $G_I$. Let $k$ be the total number of bends in $S$. Then the weight of $S'$ is  
\[
\sum_{l\in S}w(l)\geq (n+1)k+(n+2)(n-k)=n^2+2n-k.
\]

Assume now that $G_I$ has a feasible solution $S$ with weight at least $n^2+2n-k$.  Let $S'$ be the corresponding set of leaders in $I$. Since $S$ is an independent set, a port or site can be incident to at most one leader of $S$. If a site is not connected to any port in $S'$, then at most $(n-1)$ sites are incident to a leader. Since the maximum weight of a leader can be at most $(n+2)$, the weight of $S$ is at most $(n-1)(n+2)=(n^2+n-2)$, which is a contradiction since the weight of $S$ is at least $n^2+2n-k > (n^2+n-2)$ (because $n\ge k$). Therefore, $|S'|=n$.

It now remains to show that the weight of $S'$ is at most $k$. Suppose for a contradiction that $S'$ has at least $(k+1)$ $po$-leaders. Therefore, the weight of $S$ is at most $(n+1)(k+1)+(n+2)(n-k-1)=n^2+2n-(2k+1)< (n^2+2n-k)$, which is a contradiction that the weight of $S$ is at least $(n^2+2n-k)$.
\end{proof}

Now, we consider the case of $opo$-leaders. Let $I$ be an instance of this problem. Consider the set $M$ of outerstrings as before. For each outerstring $st(l)\in M$, we assign the weight $w(st(l))$ as follows:
\begin{equation}
  w(st(l))=\begin{cases}
    \alpha+3, & \text{if $l$ has no bends},\\
    \alpha+2, & \text{if $l$ has one bend},\\
    \alpha+1, & \text{if $l$ has two bends}.\\
  \end{cases}
\end{equation}
Here $\alpha =2n$. 

\begin{lemma}
\label{lem:opoBendMinimization}
Let $I$ be an instance of the boundary labeling problem with $opo$-leaders. Then $I$ has a feasible solution  with $k$ bends if and only if the instance $G_I$ has a feasible solution $W$ with total weight at least $(\alpha n + 3n - k)$.
\end{lemma}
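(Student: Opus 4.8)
The plan is to follow the same template as the proof of Lemma~\ref{lem:poBendMinimization}, the only difference being that there are now three weight classes ($0$, $1$, and $2$ bends) rather than two. I would begin by partitioning any feasible labeling into leaders with zero, one, and two bends, writing $k_0,k_1,k_2$ for the respective counts. A solution that connects every site then satisfies $k_0+k_1+k_2=n$, and its total number of bends is $k_1+2k_2$. The weights are engineered so that the base term $\alpha$ is paid uniformly per selected leader, while the additive surplus ($+3,+2,+1$) decreases by exactly one as the bend count of a leader increases by one; this linear gap is precisely what lets the independent-set weight track the total bend count.

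For the forward direction, I would take a feasible solution $S$ of $I$ with total bend count $k$, note (as in Lemma~\ref{lem:outerstring}) that the corresponding strings form an independent set of $G_I$ since the leaders are crossing-free and no site or port is shared, and then simply evaluate the weight. The computation reduces to the identity $3k_0+2k_1+k_2 = 3(k_0+k_1+k_2)-(k_1+2k_2)=3n-k$, giving total weight $(\alpha+3)k_0+(\alpha+2)k_1+(\alpha+1)k_2 = \alpha n + 3n - k$, which meets the claimed threshold.

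For the converse, suppose $G_I$ has an independent set $S$ of weight at least $(\alpha n + 3n - k)$, and let $S'$ be the corresponding leaders. I would argue $|S'|=n$ first: if some site were left unconnected, then $S'$ would contain at most $n-1$ leaders, each of weight at most $\alpha+3$, so the total weight would be at most $(n-1)(\alpha+3)=\alpha n + 3n - (\alpha+3)$; since $\alpha=2n$ and $k\le 2n$ for $opo$-leaders, we have $\alpha+3=2n+3>k$, contradicting the weight lower bound. Having established that every site is connected, I would bound the bend count by rewriting the weight exactly as $\alpha n + 3n - (k_1+2k_2)$; a weight of at least $\alpha n + 3n - k$ then forces $k_1+2k_2\le k$, i.e.\ the total number of bends is at most $k$.

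The step I expect to be the main obstacle—indeed the only point requiring care beyond routine arithmetic—is pinning down the correct value of $\alpha$. Unlike the $po$ case of Lemma~\ref{lem:poBendMinimization}, where each leader contributes at most one bend so $k\le n$, an $opo$-leader may contribute two bends and the worst-case total bend count is $2n$. The $|S'|=n$ argument requires $\alpha+3>k$ for every attainable $k$, which forces $\alpha>2n-3$; the choice $\alpha=2n$ is the cleanest value guaranteeing this while keeping the three surplus values $\{1,2,3\}$ distinct and order-preserving with respect to bend count. I would verify this inequality explicitly and confirm that the unit gap between consecutive surplus values is exactly what makes both the ``$|S'|=n$'' and the ``at most $k$ bends'' implications go through.
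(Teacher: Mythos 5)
Your proposal is correct and follows essentially the same argument as the paper's proof: the same weight identity $(\alpha+3)k_0+(\alpha+2)k_1+(\alpha+1)k_2=\alpha n+3n-(k_1+2k_2)$ for the forward direction, the same $(n-1)(\alpha+3)$ counting bound (using $\alpha=2n\ge k$) to force $|S'|=n$, and the same linear-gap argument to bound the bend count by $k$. Your added discussion of why $\alpha=2n$ is the right choice only makes explicit what the paper uses implicitly.
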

\begin{proof}
Let $S$ be a feasible solution of $I$. Clearly, the strings corresponding to the leader of $S'$ is a feasible solution for $G_I$. Let $k$ be the total number of bends in $S$, and let $k_1$ and $k_2$ be the number of strings with 1-bend and 2-bends, respectively. Therefore, $k_1+2k_2 = k$, and  the weight of $S'$ is  
\[
\sum_{l\in S}w(l)= (\alpha+2)k_1+(\alpha+1)k_2+(\alpha+3)(n-k_1-k_2)=    \alpha n+3n -k_1 -2 k_2  =   \alpha n + 3n - k.
\]

Assume now that $G_I$ has a feasible solution $S$ with weight at least $(\alpha n + 3n - k)$.  Let $S'$ be the corresponding set of leaders in $I$. Since $S$ is an independent set, a port or site can be incident to at most one leader of $S$. If a site is not connected to any port in $S'$, then at most $(n-1)$ sites are incident to a leader. Since the maximum weight of a leader can be at most $(\alpha+3)$, the weight of $S$ is at most $(n-1)(\alpha +3)=(\alpha n+3n-\alpha -3)$, which is a contradiction since the weight of $S$ is at least $\alpha n + 3n - k > (\alpha n+3n-\alpha -3)$ (because $\alpha = 2n \ge k$). Therefore, $|S'|=n$.

It now remains to show that the leaders  of $S'$ has at most $k$ bends. Suppose for a contradiction that $S'$ has at least $k_1$ $po$-leaders and $k_2$ $opo$-leaders such that $k_1+2k_2 \ge k+1$ . Therefore, the weight of $S$ is at most $(\alpha+2)k_1+(\alpha+1)k_2+ (\alpha+3)(n-k_1-k_2)=\alpha n+3n-(k_1+2k_2)-(2k_1+k_2)+(2k_1+k_2)=\alpha n+3n-(k_1+2k_2)$. Since $k_1+2k_2\geq k+1$, the weight of $S$ is strictly less than $\alpha n+3n-k$, which is a contradiction.
\end{proof}

By Lemmas~\ref{lem:poBendMinimization} and~\ref{lem:opoBendMinimization}, we have the following theorem (which settles two open questions of~\cite[Table 23.1]{handbook}).
\begin{theorem}
\label{thm:obsOpposite2Sided}
A boundary labeling that minimizes the total number of bends can be computed (if exists) in polynomial time for  both adjacent and opposite models (with sliding ports, $po$ and $opo$-leaders), even in the presence of obstacles. 
\end{theorem}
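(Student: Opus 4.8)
The plan is to reduce the bend-minimization problem, in each of its variants, to a maximum-weight independent set (MWIS) problem on an outerstring graph and then invoke Keil et al.'s~\cite{KeilMPV17} $O(N^3)$-time algorithm. The reduction has three ingredients, two of which are already in place: the discretization of the leader candidates (the grid $H'$ construction described above, extended to sliding ports) together with the two weight assignments and their correctness Lemmas~\ref{lem:poBendMinimization} and~\ref{lem:opoBendMinimization}. What remains is to assemble these pieces and to argue that a single MWIS computation yields a labeling of minimum total bends.

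First I would fix the variant (adjacent or opposite, $po$- or $opo$-leaders, with or without sliding ports and obstacles) and build the finite candidate set $M$ of admissible leaders using the refined grid $H'$; as argued above, snapping middle segments to $H'$ preserves feasibility, planarity and the bend count, and keeps each interval-incident segment inside its interval, so $|M|$ (and hence the number $N$ of string segments) is polynomial in $n$. Each candidate leader $l\in M$ is realized as an outerstring $st(l)$ whose free endpoint sits at the site and whose attached endpoint sits at the port (or on the interval, for sliding ports), giving the outerstring graph $G_I$. I would then assign $w(st(l))$ by the appropriate weight function: the $po$ weights of Lemma~\ref{lem:poBendMinimization} when only $po$-leaders are allowed, and the $opo$ weights of Lemma~\ref{lem:opoBendMinimization} otherwise.

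Next I would run Keil et al.'s algorithm on $G_I$ to obtain an MWIS of weight $W^\ast$ in time polynomial in $N$, hence polynomial in $n$. By Lemmas~\ref{lem:poBendMinimization} and~\ref{lem:opoBendMinimization}, the weights are calibrated so that any maximum-weight solution is a perfect site-to-port assignment (it uses exactly $n$ pairwise non-crossing leaders) and so that, among all feasible labelings, larger weight corresponds to fewer bends; concretely the minimum attainable bend total equals $(n^2+2n)-W^\ast$ for $po$-leaders and $(\alpha n+3n)-W^\ast$ for $opo$-leaders, where $\alpha=2n$. If the MWIS weight falls below the threshold forcing a size-$n$ independent set (equivalently, no such independent set exists), the instance has no feasible labeling and I would report this. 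Because the argument never refers to which sides the ports lie on, the same construction covers all $1$-, $2$-, $3$- and $4$-sided models, adjacent and opposite alike.

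The main obstacle is not the MWIS step, which is a direct citation, but verifying that the discretization is loss-free in the sliding-port setting: I must ensure that refining to $H'$ never forces a leader off its interval or merges two distinct intervals, and that moving the middle segments to the inserted grid lines neither creates a crossing nor changes whether a leader has zero, one, or two bends. Granting the discretization of the preceding paragraphs, the only remaining care is bookkeeping: confirming that the weight thresholds are monotone decreasing in $k$, so that maximizing weight is exactly minimizing bends, and that the gap between the ``$n$ leaders used'' and ``fewer than $n$ leaders used'' regimes is large enough (this is the role of the additive $n$ and $\alpha=2n$ offsets in the weights) to guarantee a perfect assignment whenever one exists.
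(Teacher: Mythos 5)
Your proposal is correct and follows essentially the same route as the paper: discretize the candidate leaders via the refined grid $H'$ (extended to sliding ports), encode them as weighted outerstrings using the weight schemes of Lemmas~\ref{lem:poBendMinimization} and~\ref{lem:opoBendMinimization}, and run Keil et al.'s maximum-weight independent set algorithm so that maximizing weight is exactly minimizing bends. The paper proves the theorem by precisely this combination, so your assembly (including the infeasibility check via the weight threshold) matches its proof.
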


\section{Conclusion}
The most natural directions for future research is to improve the time complexity of our algorithm for the 1-bend adjacent 2-sided model. A number of intriguing questions follow: Can we find a non-trivial lower bound on the time-complexity? Is the problem 3-sum hard or, as hard as `sorting $X+Y$'? Can we check the feasibility in near-linear time? It would also be interesting to find fast approximation algorithms for boundary labeling problems. 
\bibliographystyle{plain}
\bibliography{ref}

\end{document}